\newtheorem{thm}{Theorem}
\newtheorem{lem}[thm]{Lemma}
\newtheorem{defi}[thm]{Definition}
\newtheorem{prop}[thm]{Proposition}
\newtheorem{cor}[thm]{Corollary}
\newtheorem{exam}[thm]{Example}
\newtheorem{rem}[thm]{Remark}
\newcommand{\N}{\mathbb{N}}
\newcommand{\R}{\mathbb{R}}
\newcommand{\C}{\mathbb{C}}
\newcommand{\cL}{\mathcal{L}}
\newcommand{\cG}{\mathcal{G}}
\newcommand{\cP}{\mathcal{P}}
\newcommand{\E}{\mathbb{E}}
\newcommand{\sw}{\mathcal{S}(\mathbb{R})}
\newcommand{\td}{\mathcal{S}'(\mathbb{R})}
\newcommand{\leb}{L^2(\mathbb{R})}
\newcommand{\si}{\sigma}
\newcommand{\gam}{\gamma}
\newcommand{\al}{\alpha}
\newcommand{\om}{\omega}
\newcommand{\halb}{\frac{1}{2}}
\newcommand{\de}{\delta}
\newcommand{\var}{\mathrm{var}}
\newcommand{\vi}{\varphi}
\newcommand{\ep}{\varepsilon}
\newcommand{\wick}{\diamond}
\newcommand{\lb}{\left(}
\newcommand{\lcb}{\left\{}
\newcommand{\lvb}{\left|}
\newcommand{\lVb}{\left\|}
\newcommand{\lab}{\left\langle}
\newcommand{\lAb}{\left\langle\langle}
\newcommand{\rb}{\right)}
\newcommand{\rcb}{\right\}}
\newcommand{\rvb}{\right|}
\newcommand{\rVb}{\right\|}
\newcommand{\rab}{\right\rangle}
\newcommand{\rAb}{\right\rangle\rangle}
\newcommand{\1}{\mathbf{1}}
\title{A White Noise Approach to the Feynman Integrand for Electrons in Random Media}
\author{M. Grothaus, F. Riemann and H. P. Suryawan\\University of Kaiserslautern, Germany}
\date{}
\begin{document}

\maketitle

\begin{abstract}
\noindent Using the Feynman path integral representation of quantum mechanics it is possible to derive a model of an electron in a random system containing dense and weakly-coupled scatterers, see \cite{EG64}. The main goal of this paper is to give a mathematically rigorous realization of the corresponding Feynman integrand in dimension one based on the theory of white noise analysis. We refine and apply a Wick formula for the product of a square-integrable function with Donsker's delta functions and use a method of complex scaling. As an essential part of the proof we also establish the existence of the exponential of the self-intersection local times of a one-dimensional Brownian bridge. As result we obtain a neat formula for the propagator with identical start and end point. Thus, we obtain a well-defined mathematical object which is used to calculate the density of states, see e.g. \cite{EG64}.
\vskip0.2cm
\noindent \textbf{Keywords}: Feynman path integral, white noise analysis, local time, random media\\
\noindent \textbf{Mathematics Subject Classification}: 81Q30, 60H40, 60J55, 82D30
\end{abstract}

\maketitle

\begin{center}
{\small Copyright 2014 American Institute of Physics. This article may be downloaded for personal use only. Any other use requires prior permission of the author and the American Institute of Physics.

The following article appeared in Journal of Mathematical Physics 55, Issue 1 and may be found at:}

{\footnotesize\tt http://scitation.aip.org/content/aip/journal/jmp/55/1/10.1063/1.4862744}
{\footnotesize\tt http://dx.doi.org/10.1063/1.4862744}
\end{center}

\section{Introduction}
We start with a motivation from Physics. Using Feynman's path integral approach to quantum mechanics, see for example \cite{FeHi65}, Edwards and Gulyaev in 1964 first introduced a model of an electron moving in a random medium containing dense and weakly-coupled scatterers (e.g.~impurities) for the purpose of investigating the nature of electronic states in a disordered system. Below we briefly sketch the heuristic model proposed in \cite{EG64} and developed in \cite{Sam74}. We consider an electron moving in a set of $N$ rigid scatterers, confined within a volume $V\subset \R^d$, $d=1,2,3$, with positive Lebesgue measure $0<dx(V)<\infty$ and having a density $\rho=\frac{N}{dx(V)}$. Such a system is described by the Hamiltonian
\[ H=-\frac{\hbar^2}{2m}\Delta +\sum_{j=1}^N\eta v(q-r_j), \]
where $m$ is the mass of the electron, $\hbar$ is the reduced Planck constant, $\Delta$ is the Laplacian, $q$ is the multiplication operator, $v(\cdot-r_j)$ represents the potential of a single scatterer at position $r_j\in \R^d$, $1\le j\le N$, and the non-negative parameter $\eta$ measures the strength of the interaction of one scatterer. The propagator $G_{N,V}^v$ of such a system can be expressed in the path integral formalism as
\[
G_{N,V}^v=\int \exp\lb \frac{i}{\hbar}\int_0^T \lb \frac{m}{2}\dot{x}(t)^2-\sum_{j=1}^N\eta v(x(t)-r_j)\rb \, dt \rb \, \, \mathcal{D}(x) ,
\]
where $\mathcal{D}(x)$ denotes Feynman's path measure, i.e., integration over all paths $x:[0,T]\to \R^d$, $0<T<\infty$, with start point $x(0)=x_0\in \R^d$ and end point $x(T)=x_T\in \R^d$. After averaging over all possibilities of scatterers' configurations and taking the thermodynamic limit $\lim_{N,dx(V)\to \infty}\frac{N}{dx(V)}=\rho<\infty$ the propagator for an electron in random media containing dense and weakly-coupled scatterers such that $\lim_{\rho \to \infty,\eta \to 0}\rho \eta^2=k<\infty$ can be written informally as
\[
G_W=\int \exp\lb \frac{i}{\hbar}S_W(x) \rb \, \mathcal{D}(x),
\]
where the classical action as a function of the path $x$ is given by
\[
S_W(x)=\int_0^T\frac{m}{2}\dot{x}(t)^2 \, dt +\frac{ik}{2\hbar}\int_0^T\int_0^T W(x(t)-x(s))\, ds \, dt. 
\]
In many situations of physical study the electron-scatterers potential is taken to be Gaussian function
\[
v=(\pi l^2)^{-d/2}\exp\lb -\frac{|\cdot|^2}{l^2}\rb, \quad 0<l<\infty, 
\]
and yields the correlation function
\begin{equation}\label{potent}
W(x(t)-x(s))=(\pi L^2)^{-d/2}\exp\lb -\frac{|x(t)-x(s)|^2}{L^2}\rb, \quad t,s\in [0,T] ,
\end{equation}
where the correlation length $L$ satisfies $L^2=2l^2$ and $\lvb \cdot \rvb$ denotes the Euclidean norm on $\R^d$. The exact propagator for this case has been obtained explicitly by using a finite-dimensional approximation, see e.g. \cite{KL86}.

Motivated by the Edwards' model discussed above we are interested in the investigation of the Feynman path integral with classical action containing the correlation function  as in (\ref{potent}) for the limiting case $L\to 0$ of the correlation length of the electron-scatterers interaction system. More precisely, for a Gaussian scattering potential and by letting $L\to 0$ in (\ref{potent}) we obtain the propagator for electron-scatterers interaction with non-local Dirac delta action
\[ G_{\de}=\int \exp\lb \frac{i}{\hbar}S_{\de}(x)\rb \, \mathcal{D}(x) , \]
where
\[ S_{\de}(x)= \int_0^T \lb \frac{m}{2}\dot{x}(t)^2 + \frac{ik}{2\hbar} \int_0^T\de(x(t)-x(s))\, ds\rb \, dt. \]
It is clear that these expressions are only informal. The main goal in this paper is to give a mathematically sound realization for them.
There have been many approaches for giving a mathematically rigorous meaning to the Feynman path integral e.g. by using operator semigroup theory, analytic continuation or infinite dimensional oscillatory integrals, see \cite{AHM08} and references therein for a comprehensive discussion. In this paper we choose a white noise approach. White noise analysis is a mathematical framework which offers generalizations of concepts from finite-dimensional analysis, like differential operators, Fourier transform and distribution theory to an infinite-dimensional setting. For a complete account on this theory including its huge range of applications we refer to \cite{HKPS93,Kuo96,Oba94}. The idea of realizing Feynman integrals within the white noise framework was first mentioned in the work of Hida and Streit \cite{HS83}. We should emphasize that the white noise approach to the Feynman path integral has some interesting features, for example the admissible potentials may be very singular. In addition, instead of giving meaning directly to the Feynman integral we define the Feynman integrand as a white noise distribution. By taking the generalized expectation with respect to the white noise measure we obtain the propagator. For the development and results of the Feynman path integral within white noise analysis framework see for example \cite{SS04,Vog10,Wes95} and references therein. We summarize our strategies and results as follows. As a starting point we informally express the Feynman integrand for electrons in random media with Dirac delta correlation function without kinetic energy part as
\begin{equation}\label{motiv}
\exp\lb \int_0^T \lb -\frac{ik}{2\hbar} \int_0^T \de\lb x(t)-x(s)\rb ds\rb dt \rb \cdot \de_{x_T}\lb x(T)\rb .
\end{equation}
For taking into account also the kinetic energy part we scale by $\sqrt{i}$ and obtain the Feynman-Kac-Cameron-Doss integrand
\begin{equation}\label{Doss}
\exp\lb \frac{1}{i} \int_0^T \lb -\frac{ik}{2\hbar} \int_0^T \de\lb \sqrt{i}(B_t-B_s)\rb ds\rb dt \rb \cdot \de_{x_T}\lb x_0+\sqrt{i}B_T\rb ,
\end{equation}
where $(B_t)_{t\in [0,T]}$ is a one-dimensional standard Brownian motion (starting in 0 at 0). This ansatz is motivated by the complex scaling method in the sense of Cameron-Doss on the stochastic representation of a solution of heat equations (Feynman-Kac formula). Doss proved that for a class of potentials satisfying some analyticity and integrability conditions, the complex scaling approach as in (\ref{Doss}) is equivalent to the classical Feynman path integral formulation. This means scaling of Brownian motion by $\sqrt{i}$ gives the kinetic energy term in the context of the white noise or Wiener measure, respectively. For details and proofs see \cite{Doss82}. Since we are not dealing with a potential from the Doss class, we take (\ref{Doss}) as our starting point and give meaning to the product in (\ref{Doss}). More precisely, we prove that (\ref{Doss}) is a well-defined object as a white noise distribution. It is also important to note that the object in the exponential term leads to the so-called self-intersection local time of a Brownian motion.

The present paper is organized as follows: In Section 2 we further develop some tools from the theory of white noise analysis. In particular, we refine a Wick formula which enables us to multiply a class of square-integrable functions with Donsker's delta functions. For this purpose we do careful analysis on projection operators acting on white noise functionals. Section 3 is devoted to the study of self-intersection local times of a Brownian bridge. This is needed for applying the Wick formula to the product in (\ref{Doss}). Using an approximation procedure we show that
\[ \exp\lb z\int_0^T\int_0^T\de\lb X_t-X_s\rb \, ds \, dt\rb,  \quad z\in \C, \operatorname{Re}z\le 0, \]
where $(X_t)_{t\in [0,T]}$ is a Brownian bridge, is a square-integrable function. Here we have to restrict ourselves to the case $d=1$. Several ideas for proving this result we got from \cite{HN05}. In Section 4 we apply the results that were established in the previous sections to our main problem described above. Indeed, we are able to give a mathematically rigorous meaning to (\ref{Doss}) as a regular generalized function from $\cG'$, see e.g. \cite{PT95}. We emphasize that the Wick formula improved in Section 2 enables us to represent the pointwise product (\ref{Doss}) in terms of the Wick product which is generally well-defined for elements from $\cG'$. We also obtain a neat formula for the propagator for the electrons in random media with Dirac delta correlation function with identical start and end point. Thus, we obtain a well-defined mathematical object which is used to calculate the density of states, see e.g. \cite{EG64, Sam74, KL86}.

\section{White Noise Analysis}

In this section we briefly recall the concepts and results of white noise analysis used throughout this work, for a detailed explanation see e.g. \cite{HKPS93,Kuo96,Oba94}.

\subsection{White Noise Measure}

Let $L^2(\R)$ denote the space of real-valued square-integrable functions with respect to the Lebesgue measure on $\R$ equipped with its usual inner product $(\cdot,\cdot)$ and corresponding norm $|\cdot|$. The Schwartz space of rapidly decreasing functions on $\R$  is denoted by $\sw$ and equipped with its usual nuclear topology. Its topological dual space is the space of tempered distributions and denoted by $\td$. By identifying $\leb$ with its dual space, the dual pairing $\lab \cdot ,\cdot\rab$ between $\td$ and $\sw$ is realized as an extension of the inner product in $\leb$, i.e., $\lab \xi ,\zeta\rab =(\xi,\zeta)$ for $\xi\in \leb$ and $\zeta \in \sw$. Hence we obtain the Gel'fand triple $\sw \subset \leb \subset \td$. Equipped with its cylindrical $\si$-algebra $\mathcal{C}$, the standard Gaussian measure (or the white noise measure) $\mu$ on $\td$ arises from its characteristic function via the Bochner-Minlos theorem by
\[ \int_{\td}\exp\lb i\lab \om ,\xi\rab \rb \, d\mu(\om)=\exp\lb -\halb \lab \xi,\xi\rab \rb ,\quad \xi \in \sw. \]
The probability space $(\td, \mathcal{C}, \mu)$ is called the white noise space. For $1\le p\le \infty$ we abbreviate $L^p(\mu):=L^p(\td,\mu;\C)$, the space of complex-valued $p$-integrable functions with respect to $\mu$, together with its usual norm $\lVb \cdot\rVb_{L^p(\mu)}$. A fundamental property of the measure $\mu$ is that for fixed $\xi_1,\ldots,\xi_d\in \sw$, $d\in \N$, the random vector $\lb \lab \cdot,\xi_1 \rab,\ldots, \lab \cdot,\xi_d \rab\rb$ is centered Gaussian with covariance structure $\lb \lab \xi_k,\xi_l \rab\rb_{k,l=1,\ldots,d}$. Thus, if we extend $\lab \cdot,\cdot \rab$ in a bilinear way to elements from the complexified spaces, we obtain that the space of smooth polynomials
\[ \cP:=\mathrm{span}\lcb \lab \cdot,\xi \rab^n : \xi \in \mathcal{S}(\R;\C) , n\in \N_0\rcb \]
is a subspace of $L^2(\mu)$. We use the notation $L^2(\R^n;\C)_{sym}$ for the symmetric Hilbert space of complex-valued square-integrable functions with respect to the Lebesgue measure and keep the symbol $\lab\cdot,\cdot\rab$ for its bilinear dual pairing and $|\cdot|$ for its norm. Similar as before, the dual pairing between $\mathcal{S}'(\R^n)_{sym}$ and $\mathcal{S}(\R^n;\C)_{sym}$ is realized as a bilinear extension of $\lab \cdot,\cdot \rab$ and is denoted by the same symbol. With this notation, each $\vi\in \cP$ of degree $N\in \N_0$ can uniquely be represented as a Wick polynomial
\begin{equation}\label{Wickpol}
\vi(\om) =\sum_{n=0}^N \lab :\om^{\otimes n}:, \vi^{(n)}\rab, \quad \vi^{(n)} \in \mathrm{span}\lcb \xi^{\otimes n}:\xi \in \mathcal{S}(\R;\C)\rcb, \, n\in \N_0 , 
\end{equation}
where $:\om^{\otimes n}:\in \mathcal{S}'(\R^n)_{sym}$ denotes the $n$-th Wick power of $\om \in \td$ and $\vi^{(n)}$ is called the $n$-th kernel of $\vi$. They have the advantage to fulfill the orthogonality relation
\[ \int_{\td}\lab :\om^{\otimes n}:, \vi^{(n)}\rab \lab :\om^{\otimes m}:, \psi^{(m)}\rab \, d\mu(\om)=\de_{nm}n! \lab \vi^{(n)},\psi^{(m)} \rab , \,\,\, n,m\in \N_0, \]
where $\de_{nm}$ denotes the Kronecker delta. This implies that for general $f^{(n)}\in L^2(\R^n;\C)_{sym}$ we can define $\lab :\cdot^{\otimes n}:, f^{(n)}\rab$ as an $L^2(\mu)$-limit. As an example a Brownian motion $(B_t)_{t\in [0,T]}$ starting in $0$ at time $0$ can be realized within this framework by
\[ B_t:=\lab \cdot, \1_{[0,t)}\rab , \]
where $\1_A$ denotes the indicator function of the set $A\subset \R$. The Kolmogorov-Chentsov theorem ensures that there exists a modification of $(B_t)_{t\in [0,T]}$ which has continuous paths almost surely. From now on we always work with a standard Brownian motion, i.e. its continuous modification starting in $0$. 

By density of the space of polynomials, for every $F\in L^2(\mu)$ there exists a unique sequence $(f^{(n)})_{n\in \N_0}$ where $f^{(n)}\in L^2(\R^n;\C)_{sym}$ such that
\[ F=\sum_{n=0}^{\infty}\lab :\cdot^{\otimes n}:, f^{(n)}\rab , \]
where the convergence holds in $L^2(\mu)$. This expansion is called (Wiener-It\^o) chaos decomposition, $f^{(n)}$ is called the $n$-th kernel of $F$, and $\| F\|_{L^2(\mu)}^2=\sum_{n=0}^{\infty}n! \lvb f^{(n)} \rvb^2$.

\subsection{Regular Test Functions and Distributions}

In this paper the space $\cG$ of regular test functions and its dual space $\cG'$ of regular distributions are of interest. They were first introduced and analyzed in \cite{PT95} and later characterized via the Bargmann-Segal transform in \cite{GKS97}. An important example of a regular distribution is the Donsker's delta function. It is also important to note that pointwise multiplication is a continuous operation from $\cG \times \cG$ to $\cG$. The space $\cG$ is the subspace of $L^2(\mu)$ consisting of all
\begin{equation}\label{regtes}
\vi =\sum_{n=0}^{\infty}\lab :\cdot^{\otimes n}:, \vi^{(n)}\rab, \quad \vi^{(n)}\in L^2(\R^n;\C)_{sym}, \,  n\in \N_0,
\end{equation}
such that
\[ \lVb \vi \rVb_q^2:=\sum_{n=0}^{\infty} n!2^{q n}\lvb \vi^{(n)}\rvb^2 <\infty \]
for every $q\in \N_0$, and the family of norms $\|\cdot\|_q$ is taken to topologize $\cG$, i.e. a sequence $(\vi_k)_{k\in \N}$ converges in $\cG$ if and only if it converges with respect to each of the norms. Obviously $\cP\subset \cG$, and since $\|\cdot\|_0=\|\cdot\|_{L^2(\mu)}$ we have that $\cG \subset L^2(\mu)$ continuously. Similar as before, the dual pairing $\lAb \cdot,\cdot\rAb$ between $\cG'$ and $\cG$ is realized as the bilinear extension of the inner product on the real part of $L^2(\mu)$ and we obtain the triple $\cG \subset L^2(\mu) \subset \cG'$. More generally, it has been shown in \cite{PT95} that
\[ \cG \subset \bigcap_{1\le p<\infty}L^p(\mu) \quad \mbox{ and } \quad \bigcup_{1<p\le \infty}L^p(\mu) \subset \cG' \]
continuously with respect to the projective and inductive limit topology, respectively. 

Important examples of elements in $\cG$ are the Wick exponentials
\[ :\exp\lb \lab \cdot, \xi\rab\rb: \, :=\exp\lb \lab \cdot, \xi\rab-\halb \lab \xi, \xi\rab \rb =\sum_{n=0}^{\infty} \frac{1}{n!}\lab :\cdot^{\otimes n}:, \xi^{\otimes n}\rab, \quad \xi \in L^2(\R;\C). \]
The $S$-transform of $\Phi \in \cG'$ is defined to be a mapping $S\Phi$ given by
\[ L^2(\R;\C) \ni \xi \mapsto S\Phi(\xi):=\lAb \Phi, :\exp\lb \lab \cdot, \xi\rab\rb:\rAb \in \C .  \]
The Wick exponentials form a total set in $\cG$, so each $\Phi \in \cG'$ is uniquely characterized by its $S$-transform. Since $\1 \in \cG$, the generalized expectation of $\Phi \in \cG'$ can be defined to be $\E_{\mu}(\Phi) :=\lAb \Phi ,\1 \rAb =S\Phi (0)$.
For $\Phi,\Psi\in \cG'$ their Wick product $\Phi \wick \Psi$ is defined to be the unique element in $\cG'$ such that $S(\Phi \wick \Psi)(\xi)=S\Phi(\xi) \cdot S\Psi(\xi)$ holds for all $\xi \in L^2(\R;\C)$. It is important to note that $\wick$ is continuous from $\cG' \times \cG'$ to $\cG'$. As mentioned before, $\cG$ is closed under pointwise multiplication which is a continuous operation from $\cG \times \cG$ to $\cG$. Hence, one can extend this multiplication allowing one factor to be in $\cG'$ by defining
\[ \lAb \Phi \cdot \vi ,\psi\rAb :=\lAb \Phi, \vi \cdot \psi \rAb, \quad \Phi \in \cG', \vi,\psi \in \cG, \]
and this multiplication is a continuous operation from $\cG'\times \cG$ to $\cG'$, see \cite{PT95}.

A well-established regular distribution is Donsker's delta $\de\lb \lab \cdot,\eta\rab -a\rb$ which is defined for $a\in \C$ and $\eta \in L^2(\R;\C)$ with $\lab \eta ,\eta\rab\notin (-\infty,0]$ and characterized via its $S$-transform
\[ S\lb \de\lb \lab \cdot,\eta\rab -a\rb\rb (\xi)=\frac{1}{\sqrt{2\pi \lab \eta,\eta\rab}} \exp\lb -\frac{1}{2\lab \eta,\eta\rab}\lb a-\lab \xi,\eta\rab\rb^2 \rb, \, \xi \in L^2(\R;\C) .\]
In applications, for example in the context of Feynman integrals, a common choice is $\eta :=\1_{[0,t)}$, $t>0$, and $a\in \R$. Hence Donker's delta function can be considered as the informal composition of the Dirac delta distribution $\de_a\in \td$ with Brownian motion. In that case Donsker's delta serves to pin a Brownian motion path at time $t$ in the point $a$. It can also be proved, using uniqueness of the $S$-transform, that Donsker's delta function is homogeneous of degree $-1$ for $z\in \C$ with $\operatorname{arg}z\in \lb -\frac{\pi}{2},\frac{\pi}{2} \rb$, i.e.,
\[ \de\lb \lab \cdot,\eta\rab -a\rb=\frac{1}{z} \de\lb \frac{\lab \cdot,\eta\rab}{z} -\frac{a}{z}\rb ,\]
see e.g. \cite{Wes95} for details and proofs.

\subsection{Projection Operators}

Now we recall briefly and present some properties of projection operators acting on functions of white noise. First we fix a notation as follows. For $\xi=\lb \xi_1,\ldots,\xi_d \rb \in \leb^d$ let
\[ \cP_{\xi}:=\lcb P\lb \lab \cdot, \xi_1\rab,\ldots,\lab \cdot,\xi_d \rab \rb : P \mbox{ is a polynomial }\rcb .\]
Note that the closure of $\cP_{\xi}$ in $L^p(\mu)$, $1\le p<\infty$, is given by
\begin{equation}\label{closure}
\overline{\cP_{\xi}}^{L^p(\mu)}=\lcb f\lb \lab \cdot, \xi_1\rab,\ldots,\lab \cdot,\xi_d \rab \rb :f \in L^p\lb \R^d,\mu_M;\C \rb \rcb,
\end{equation}
where $\mu_M$ is the Gaussian measure on $\R^d$ with mean zero and covariance structure $M=\lb \lab \xi_k,\xi_l\rab \rb_{k,l=1,\ldots,d}$. Also note that $\cP_{\xi} \subset \cG$.

The projection operator was first introduced in \cite{Wes95}. The basic idea of this operator is to remove the dependency on a monomial $\lab \cdot,\eta \rab$ from a random variable. This turns out to be useful to represent the pointwise product of a (generalized) random variable with Donsker's delta function. For $\eta \in \leb$ with $|\eta|=1$ let $P_{\bot,\eta}$ denote the orthogonal projection onto the orthogonal complement of $\mathrm{span}\lcb \eta \rcb$ in $L^2(\R)$ and consider its complexification, denoted by the same symbol, i.e.
\[ P_{\bot,\eta}\xi=\xi-\lab \xi,\eta\rab\eta, \quad \xi \in L^2(\R;\C). \]
It was shown in \cite[Lemma 69]{Wes95}:
\begin{lem}\label{proj1}
Let $\eta \in \mathcal{S}(\R)$ with $|\eta|=1$ and consider the unique continuous version of a smooth polynomial $\vi \in \cP$ as in (\ref{Wickpol}). Then 
\begin{equation}\label{chaosproj}
\vi\lb \cdot-\lab \cdot,\eta\rab \eta\rb=\sum_{n=0}^N\sum_{k=0}^{\left\lfloor \frac{n}{2}\right\rfloor}\frac{n!(-1)^k}{k!(n-2k)!2^k}\lab :\cdot^{\otimes(n-2k)}:,P_{\bot,\eta}^{\otimes(n-2k)}\lb \eta^{\otimes 2k}\hat{\otimes}_{2k}\vi^{(n)} \rb \rab .
\end{equation}
\end{lem}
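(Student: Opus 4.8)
The plan is to verify the identity by an induction-free computation using the chaos decomposition together with the multinomial structure of Wick powers under linear substitutions. First I would fix $\eta\in\sw$ with $|\eta|=1$ and decompose $L^2(\R)=\mathrm{span}\{\eta\}\oplus\mathrm{span}\{\eta\}^\perp$, so that under the corresponding orthogonal decomposition $\td\cong\R\times\cN'$ the white noise measure factorizes as $\mu=\mu_1\otimes\mu_\perp$, where $\mu_1$ is the standard one-dimensional Gaussian. The key observation is that $\om\mapsto\om-\lab\om,\eta\rab\eta$ simply zeroes out the first coordinate, so $\vi(\cdot-\lab\cdot,\eta\rab\eta)$ depends only on the $\mu_\perp$-component. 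The right-hand side should then be recognized as the chaos decomposition (with respect to $\mu_\perp$, i.e.\ on the orthogonal complement) of precisely this random variable, and the combinatorial coefficients $\frac{n!(-1)^k}{k!(n-2k)!2^k}$ are exactly those that appear when one expands a Wick power $:\om^{\otimes n}:$ in terms of ordinary tensor powers and then re-Wick-orders after projecting.

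The main steps I would carry out: (1) By linearity and continuity it suffices to prove the identity for a single Wick monomial $\vi=\lab:\cdot^{\otimes n}:,\xi^{\otimes n}\rab$ with $\xi\in\sw$; the general case follows by taking linear combinations and using that both sides are continuous in the appropriate sense on $\cP$. (2) Write $\xi = a\eta + \zeta$ with $a=\lab\xi,\eta\rab$ and $\zeta=P_{\bot,\eta}\xi\perp\eta$. Using the classical Hermite/Wick identity $:\cdot^{\otimes n}:$ evaluated on $\lab\cdot,\xi\rab = a\lab\cdot,\eta\rab+\lab\cdot,\zeta\rab$ together with the generating-function relation for Wick powers, one obtains an explicit expansion of $\lab:\cdot^{\otimes n}:,\xi^{\otimes n}\rab$ in terms of products $\lab:\cdot^{\otimes j}:,\eta^{\otimes j}\rab\cdot(\text{Wick powers in }\zeta)$. (3) Substitute $\cdot\mapsto\cdot-\lab\cdot,\eta\rab\eta$: this sends $\lab\cdot,\eta\rab\mapsto 0$, hence only the term with $j=0$ survives, leaving $:\exp$-type factors purely in the $\zeta$-direction; concretely $\vi(\cdot-\lab\cdot,\eta\rab\eta)$ becomes the polynomial in $\lab\cdot,\zeta\rab$ obtained by setting the $\eta$-variable to zero in the Hermite expansion. (4) Finally, I would re-express this polynomial in $\lab\cdot,\zeta\rab$ back in Wick-ordered form using the inverse Hermite relation, and identify the resulting kernels with $P_{\bot,\eta}^{\otimes(n-2k)}(\eta^{\otimes 2k}\hat\otimes_{2k}\vi^{(n)})$ by checking that contracting $\vi^{(n)}=\xi^{\otimes n}$ against $\eta^{\otimes 2k}$ in $2k$ slots produces $a^{2k}\xi^{\otimes(n-2k)}$ and that projecting with $P_{\bot,\eta}$ replaces each remaining $\xi$ by $\zeta$, matching the coefficients bookkeeping in step (3).

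The main obstacle is the combinatorial bookkeeping in steps (3)--(4): one must carefully track how the double-factorial-type coefficients from Wick-ordering a single scalar variable (the $\eta$-direction) interact with the symmetrized contraction $\hat\otimes_{2k}$ and the partial projection $P_{\bot,\eta}^{\otimes(n-2k)}$ when $\vi^{(n)}$ is a general symmetric tensor rather than a pure power $\xi^{\otimes n}$. To handle the general kernel cleanly, rather than only pure tensor powers, I would use polarization: both sides of (\ref{chaosproj}) are symmetric $n$-linear in the variables defining $\vi^{(n)}$, so it is enough to verify the identity when $\vi^{(n)}=\xi^{\otimes n}$ and then extend by multilinearity, which reduces everything to the one-dimensional Hermite-polynomial computation and avoids ever manipulating $\hat\otimes_{2k}$ in full generality. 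A secondary technical point is justifying that we may work with the continuous version of $\vi$ and that the substitution $\om\mapsto\om-\lab\om,\eta\rab\eta$ is well-defined $\mu$-a.e. and on the relevant pointwise level; this is exactly the content invoked from \cite[Lemma 69]{Wes95} and I would simply cite it.
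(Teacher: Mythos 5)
Your argument is correct, and in fact the paper offers no proof of this lemma at all — it simply cites \cite[Lemma 69]{Wes95} — so your self-contained computation is a legitimate route and almost certainly the same one used there. The reduction to pure powers $\vi^{(n)}=\xi^{\otimes n}$ is fully justified here because the kernels of elements of $\cP$ are by definition in $\mathrm{span}\lcb \xi^{\otimes n}\rcb$, and the bookkeeping you worry about in steps (3)--(4) does close up: writing $a=\lab\xi,\eta\rab$, $\zeta=P_{\bot,\eta}\xi$, expanding the Wick power into ordinary powers of $\lab\cdot,\xi\rab$, substituting (which sends $\lab\cdot,\xi\rab\mapsto\lab\cdot,\zeta\rab$), and re-Wick-ordering, the coefficient of $\lab:\cdot^{\otimes(n-2m)}:,\zeta^{\otimes(n-2m)}\rab$ collapses via the binomial theorem to $\frac{n!}{m!(n-2m)!2^m}\lb\lab\zeta,\zeta\rab-\lab\xi,\xi\rab\rb^m=\frac{n!(-1)^m}{m!(n-2m)!2^m}a^{2m}$, which is exactly the kernel $P_{\bot,\eta}^{\otimes(n-2m)}\lb\eta^{\otimes 2m}\hat{\otimes}_{2m}\xi^{\otimes n}\rb$ with the stated coefficient.
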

\noindent Here $\hat{\otimes}_{2k}$ denotes the symmetrization of the contraction of tensor products, a continuous bilinear mapping $\otimes_{2k}:L^2(\R;\C)^{\otimes (2k+n)}\times L^2(\R;\C)^{\otimes (2k+m)} \to L^2(\R;\C)^{\otimes (n+m)}$ characterized by the property
\begin{align*}
&(\xi_1\otimes \cdots \otimes \xi_{2k+n})\otimes_{2k}(\zeta_1\otimes \cdots \otimes \zeta_{2k+m})\\
&=\lab \xi_1,\zeta_1\rab \cdots \lab \xi_{2k},\zeta_{2k}\rab \xi_{2k+1}\otimes \cdots \otimes \xi_{2k+n} \otimes \zeta_{2k+1} \otimes \cdots \zeta_{2k+m}
\end{align*}
for $\xi_1,\ldots,\xi_{2k+n},\zeta_1,\ldots,\zeta_{2k+m}\in L^2(\R;\C)$, see \cite{Oba94} for details. The right-hand side of (\ref{chaosproj}) is well-defined in $\cG$ for $\eta \in L^2(\R)$ with $|\eta|=1$ and allows to state
\begin{defi}
For $\eta \in \leb$ with $|\eta|=1$ the projection operator $P_{\eta}:\cP\to \cG$ is defined by
\begin{equation}\label{proj2}
P_{\eta}\vi:=\sum_{n=0}^N\sum_{k=0}^{\left\lfloor \frac{n}{2}\right\rfloor}\frac{n!(-1)^k}{k!(n-2k)!2^k}\lab :\cdot^{\otimes(n-2k)}:,P_{\bot,\eta}^{\otimes(n-2k)}\lb \eta^{\otimes 2k}\hat{\otimes}_{2k}\vi^{(n)} \rb \rab .
\end{equation}
\end{defi}

The proof of the following theorem can be found in \cite[Theorem 71]{Wes95}.
\begin{thm}
For $\eta \in \leb$ with $|\eta|=1$ there exists a unique extension of $P_{\eta}$ to a linear continuous operator $P_{\eta}:\cG \to \cG$.
\end{thm}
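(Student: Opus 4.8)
The plan is to prove the theorem by a standard density/continuity argument: $\cP_\eta$ (equivalently $\cP$) is dense in $\cG$, so a linear continuous operator $P_\eta : \cP \to \cG$ extends uniquely and continuously to $\cG$ provided one establishes a continuity estimate of the form $\lVb P_\eta \vi \rVb_q \le C_q \lVb \vi \rVb_{p(q)}$ for every $q \in \N_0$, with $p(q) \in \N_0$ and $C_q$ depending only on $q$ (and on $\eta$ through $|\eta|=1$). Uniqueness of the extension is automatic since $\cG$ is (in particular) Hausdorff and $\cP$ is dense. So the whole content is the norm estimate, applied to the explicit chaos expansion (\ref{proj2}).

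First I would fix $\vi \in \cP$ with kernels $\vi^{(n)}$, $n=0,\dots,N$, and read off from (\ref{proj2}) that the $m$-th kernel of $P_\eta\vi$ is
\[
(P_\eta\vi)^{(m)} = \sum_{k=0}^{\lfloor (N-m)/2 \rfloor} \frac{(m+2k)!(-1)^k}{k!\,m!\,2^k}\, P_{\bot,\eta}^{\otimes m}\lb \eta^{\otimes 2k}\,\hat{\otimes}_{2k}\,\vi^{(m+2k)}\rb,
\]
(reindexing $n = m+2k$). Then I would bound $|(P_\eta\vi)^{(m)}|$ in $L^2(\R^m;\C)_{sym}$ term by term. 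Two facts drive the estimate: $P_{\bot,\eta}$ is an orthogonal projection, hence a contraction on $L^2(\R;\C)$ and on all its tensor powers, so it can simply be dropped; and contraction against $\eta^{\otimes 2k}$ with $|\eta|=1$ together with symmetrization is a contraction, so $|\eta^{\otimes 2k}\hat{\otimes}_{2k}\vi^{(m+2k)}| \le |\vi^{(m+2k)}|$. Therefore
\[
|(P_\eta\vi)^{(m)}| \le \sum_{k=0}^{\lfloor (N-m)/2 \rfloor} \frac{(m+2k)!}{k!\,m!\,2^k}\, |\vi^{(m+2k)}|.
\]
Now I would multiply by $\sqrt{m!}\,2^{qm/2}$, apply the triangle inequality in $\ell^2$ over $m$, and use Cauchy-Schwarz in $k$ against a convergent geometric-type weight to absorb the combinatorial factors $\frac{(m+2k)!}{k!\,m!\,2^k}$ into the gap between the norm $\lVb\cdot\rVb_q$ on the left and a higher norm $\lVb\cdot\rVb_{q+1}$ (or $\lVb\cdot\rVb_{q+2}$) on the right. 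Concretely one checks that $\frac{(m+2k)!}{k!\,m!\,2^k} \le (m+2k)!^{1/2}\,m!^{-1/2}\cdot C\cdot 2^{k}$ type bounds hold (using $(m+2k)!/(m!\,k!) \le 2^{m+2k}$ and $\binom{m+2k}{k}\le 2^{m+2k}$), so that after inserting the weights $2^{qm}$ versus $2^{(q+1)(m+2k)}$ the extra factor $2^{(m+2k)}/2^m = 2^{2k}$ and the $1/2^k$ combine with a summable remainder $\sum_k (\text{const})^{-k}$; the surviving sum over $m$ is exactly $\lVb\vi\rVb_{q+1}^2$ up to a constant. This yields $\lVb P_\eta\vi\rVb_q \le C_q \lVb\vi\rVb_{q+1}$ for all $q$, which is the required continuity of $P_\eta:\cP\to\cG$ for the projective-limit topology of $\cG$.

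The main obstacle is the bookkeeping of the combinatorial coefficients: one must choose the right Cauchy-Schwarz split in the $k$-sum and the right loss in the index $q \mapsto q+1$ (or $q+2$) so that the factorials $(m+2k)!$ are controlled by the exponential weights $2^{qn}$ rather than overwhelming them. This is the only place where anything could go wrong; everything else (contractivity of $P_{\bot,\eta}$ and of the $\eta$-contraction, density of $\cP$, uniqueness from Hausdorffness) is routine. Once the estimate $\lVb P_\eta\vi\rVb_q \le C_q\lVb\vi\rVb_{q+1}$ is in hand, I would conclude: given $\vi \in \cG$, pick $\vi_k \in \cP$ with $\vi_k \to \vi$ in $\cG$; then $(P_\eta\vi_k)_k$ is Cauchy in each $\lVb\cdot\rVb_q$, hence converges in $\cG$ to a limit which we define to be $P_\eta\vi$; independence of the approximating sequence and linearity are immediate from the estimate, and continuity $\lVb P_\eta\vi\rVb_q \le C_q\lVb\vi\rVb_{q+1}$ passes to the limit, giving a (the unique) continuous linear extension $P_\eta:\cG\to\cG$.
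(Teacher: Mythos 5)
Your proposal is correct in structure and follows the standard route: the paper itself gives no proof here but defers to \cite[Theorem 71]{Wes95}, and the argument there is exactly this kind of kernel-by-kernel estimate on the chaos expansion (\ref{proj2}) combined with density of $\cP$ in $\cG$. One caveat: the parenthetical inequality $(m+2k)!/(m!\,k!)\le 2^{m+2k}$ you invoke is false for large $k$ (the left side grows like $k^k$ in $k$ for fixed $m$), so the combinatorial factor $\frac{(m+2k)!}{k!\,m!\,2^k}$ cannot be tamed by a purely exponential bound; it must be absorbed, as you in fact indicate, into the gap between the weights $2^{qm}$ and $2^{q'(m+2k)}$ via Cauchy--Schwarz in $k$, using $\frac{(m+2k)!}{m!\,k!^2}\le (4e^2)^k e^m$ or similar, which forces an index loss of $q\mapsto q+3$ rather than $q+1$. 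Since any finite loss suffices for continuity in the projective-limit topology of $\cG$, this does not affect the conclusion.
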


\begin{rem}\label{proj3}
It is obvious from (\ref{proj2}) that $\lim_{k\to \infty}\eta_k=\eta$ in the unit sphere of $\leb$ implies 
\begin{equation}\label{repr} 
\lim_{k\to \infty}P_{\eta_k}\vi=P_{\eta}\vi
\end{equation}
in $\cG$ for every fixed $\vi \in \cP$. It is even possible to show with techniques similar to those in the proof of \cite[Theorem 71]{Wes95} that for every $r\ge 0$ there exists $q\ge 0$ such that
\[ \lim_{k\to \infty}\sup_{\vi \in \cP,\| \vi\|_q\le 1}\lVb P_{\eta_k}\vi- P_{\eta}\vi\rVb_r =0,\]
i.e. we have uniform convergence.
\end{rem}

\begin{lem}\label{proj4}
For $\eta \in \leb$ with $|\eta|=1$ and $\vi,\psi \in \cG$ it holds $P_{\eta}\lb \vi \cdot \psi\rb=P_{\eta}\vi \cdot P_{\eta}\psi$.
\end{lem}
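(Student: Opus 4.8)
The plan is to exploit the fact that $P_{\eta}$ is defined as the continuous linear extension of the operator $\vi \mapsto \vi(\cdot - \lab \cdot, \eta\rab\eta)$ on smooth polynomials, so that on $\cP_{\eta'}$-type domains $P_{\eta}$ acts literally by the substitution $\om \mapsto \om - \lab \om,\eta\rab\eta = P_{\bot,\eta}$-shift. First I would reduce to $\eta \in \sw$ with $|\eta| = 1$: for general $\eta \in \leb$ on the unit sphere pick $\eta_k \in \sw$, $|\eta_k| = 1$, with $\eta_k \to \eta$ in $\leb$, establish the identity for each $\eta_k$, and then pass to the limit using Remark \ref{proj3} together with the continuity of pointwise multiplication $\cG \times \cG \to \cG$; here one has to be a little careful because the uniform-convergence statement in Remark \ref{proj3} is stated for $\vi$ ranging over bounded sets of $\cP$, so I would first prove the lemma for $\vi,\psi \in \cP$ and only afterwards extend to $\vi,\psi \in \cG$ by density and joint continuity of both sides in $(\vi,\psi)$.

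For $\eta \in \sw$ and $\vi,\psi \in \cP$, the key observation is that by Lemma \ref{proj1} we may identify $P_{\eta}\vi$ with the continuous function $\om \mapsto \vi(\om - \lab \om,\eta\rab\eta)$, and likewise for $\psi$ and for the product $\vi\psi$ (which is again a smooth polynomial, hence in $\cP$, and admits a unique continuous version equal to the pointwise product of the continuous versions of $\vi$ and $\psi$). Then
\[
\big(P_{\eta}(\vi\psi)\big)(\om) = (\vi\psi)\big(\om - \lab \om,\eta\rab\eta\big) = \vi\big(\om - \lab \om,\eta\rab\eta\big)\,\psi\big(\om - \lab \om,\eta\rab\eta\big) = \big(P_{\eta}\vi\big)(\om)\,\big(P_{\eta}\psi\big)(\om)
\]
as an identity of continuous functions on $\td$, and a fortiori $\mu$-a.e., i.e. as an identity in $L^2(\mu)$. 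Since all three objects $P_{\eta}(\vi\psi)$, $P_{\eta}\vi$, $P_{\eta}\psi$ lie in $\cG$ and $\cG \hookrightarrow L^2(\mu)$ continuously with the $L^2(\mu)$-norm being one of the defining norms, equality in $L^2(\mu)$ forces equality in $\cG$. This settles the polynomial case for $\eta \in \sw$.

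To remove the restriction $\eta \in \sw$, fix $\vi,\psi \in \cP$ and $\eta_k \in \sw$ with $|\eta_k| = 1$ and $\eta_k \to \eta$. We have $P_{\eta_k}(\vi\psi) \to P_{\eta}(\vi\psi)$, $P_{\eta_k}\vi \to P_{\eta}\vi$ and $P_{\eta_k}\psi \to P_{\eta}\psi$ in $\cG$ by \eqref{repr}, and since multiplication is continuous on $\cG$, $P_{\eta_k}\vi \cdot P_{\eta_k}\psi \to P_{\eta}\vi \cdot P_{\eta}\psi$ in $\cG$; passing to the limit in $P_{\eta_k}(\vi\psi) = P_{\eta_k}\vi \cdot P_{\eta_k}\psi$ gives the claim for all $\vi,\psi \in \cP$ and all $\eta$ on the unit sphere of $\leb$. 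Finally, for general $\vi,\psi \in \cG$ choose polynomials $\vi_j \to \vi$, $\psi_j \to \psi$ in $\cG$; then $\vi_j\psi_j \to \vi\psi$ in $\cG$ by continuity of multiplication, $P_{\eta}$ is continuous on $\cG$, and so both sides of $P_{\eta}(\vi_j\psi_j) = P_{\eta}\vi_j \cdot P_{\eta}\psi_j$ converge in $\cG$ to the respective sides of the asserted identity. The main obstacle is essentially bookkeeping: making sure that the "substitution" interpretation of $P_{\eta}$ from Lemma \ref{proj1} is legitimately applied to the product $\vi\psi$ (one must invoke that the continuous version of a product of smooth polynomials is the product of the continuous versions, which holds since these are genuine polynomials in finitely many jointly Gaussian coordinates), and keeping the two successive density/continuity limits — first $\eta_k \to \eta$, then $\vi_j,\psi_j \to \vi,\psi$ — in the correct order.
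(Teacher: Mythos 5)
Your proposal is correct and follows essentially the same route as the paper: verify the multiplicativity for $\eta\in\sw$ and $\vi,\psi\in\cP$ directly from the substitution interpretation of $P_\eta$ (which the paper dismisses as ``clear by definition''), then pass to general $\eta$ on the unit sphere via \eqref{repr} and continuity of multiplication in $\cG$, and finally to general $\vi,\psi\in\cG$ by density of $\cP$ and continuity of $P_\eta$. The only difference is that you spell out the polynomial base case in more detail than the paper does.
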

\begin{proof}
The property is clear by definition if $\eta \in \mathcal{S}(\R)$ and $\vi,\psi \in \cP$. For general $\eta \in \leb$ let $(\eta_k)_{k\in \N}$ be a sequence in $ \mathcal{S}(\R)$ converging to $\eta$ in $\leb$ and fulfilling $|\eta_k|=1$ for all $k\in \N$. By (\ref{repr}) and continuity of pointwise multiplication in $\cG$ it follows
\[ P_{\eta}\lb \vi \cdot \psi\rb=\lim_{k\to \infty}P_{\eta_k}\lb \vi \cdot \psi\rb=\lim_{k\to \infty}P_{\eta_k}\vi \cdot P_{\eta_k}\psi=P_{\eta}\vi \cdot P_{\eta}\psi \]
in $\cG$ for fixed $\vi,\psi \in \cP$. The general case $\vi,\psi \in \cG$ follows by another approximation.
\end{proof}

For $\xi=(\xi_1,\ldots,\xi_d)\in L^2(\R)^d$ we have that $\cP_{\xi} \subset \cG$, so $P_{\eta}$ is well-defined on $\cP_{\xi}$ for any $\eta\in \leb$ with $|\eta|=1$. The following lemma characterizes the action of $P_{\eta}$ on $\cP_{\xi}$.
\begin{lem}\label{polpw}
For $\eta\in \leb$ with $|\eta|=1$ and $\xi=(\xi_1,\ldots,\xi_d)\in L^2(\R)^d$, $d\in \N$, we have
\[ P_{\eta}P\lb \lab \cdot,\xi_1\rab ,\ldots, \lab \cdot,\xi_d\rab\rb=P\lb \lab \cdot,P_{\bot,\eta}\xi_1\rab ,\ldots, \lab \cdot,P_{\bot,\eta}\xi_d\rab\rb \]
for every polynomial $P$ on $\R^d$.
\end{lem}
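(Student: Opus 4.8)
The plan is to reduce the statement, via the continuity and multiplicativity of $P_\eta$ already established, to the single case of a linear monomial $\lab \cdot,\xi_i\rab$, and there to read off the formula directly from the defining series (\ref{proj2}). First I would observe that both sides of the claimed identity are polynomial expressions in the random variables $\lab \cdot,\xi_1\rab,\dots,\lab \cdot,\xi_d\rab$ and $\lab \cdot,P_{\bot,\eta}\xi_1\rab,\dots,\lab \cdot,P_{\bot,\eta}\xi_d\rab$ respectively, and that $P_\eta$ is linear and, by Lemma \ref{proj4}, multiplicative on $\cG \supset \cP_\xi$. Hence it suffices to prove the identity when $P$ is a single coordinate function, i.e. to show
\[ P_\eta \lab \cdot,\xi_i\rab = \lab \cdot,P_{\bot,\eta}\xi_i\rab, \quad i=1,\dots,d. \]
Indeed, a general polynomial $P$ is a finite linear combination of products of coordinate functions, and applying $P_\eta$ term by term and distributing it across each product (legitimate since each factor lies in $\cG$) turns $P(\lab\cdot,\xi_1\rab,\dots)$ into $P(P_\eta\lab\cdot,\xi_1\rab,\dots) = P(\lab\cdot,P_{\bot,\eta}\xi_1\rab,\dots)$.

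For the monomial case I would compute $P_\eta$ on $\vi = \lab \cdot,\xi_i\rab$ using (\ref{proj2}). This $\vi$ has $N=1$, with first kernel $\vi^{(1)}=\xi_i$ and all other kernels zero; note also $\lab:\cdot^{\otimes 1}:,\xi_i\rab = \lab\cdot,\xi_i\rab$. In (\ref{proj2}) the outer sum runs over $n=0,1$ and for $n=1$ the inner sum has only the term $k=0$. The $n=0$ contribution vanishes since $\vi^{(0)}=0$; the $n=1,k=0$ term gives $\lab:\cdot^{\otimes 1}:,P_{\bot,\eta}\xi_i\rab = \lab\cdot,P_{\bot,\eta}\xi_i\rab$, exactly as desired.

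One point that needs care: the formula (\ref{proj2}) as written defines $P_\eta$ on $\cP$, the polynomials in $\lab\cdot,\xi\rab$ for $\xi\in\sw$, whereas here the $\xi_i$ are merely in $\leb$, so strictly speaking we are applying the continuous extension of $P_\eta$ to $\cG$, and $\eta$ too may only lie in $\leb$. To handle this cleanly I would first establish the identity for $\eta,\xi_1,\dots,\xi_d\in\sw$ (the genuine-polynomial case, where (\ref{proj2}) applies verbatim), and then pass to general $\eta,\xi_i\in\leb$ by approximation: choose Schwartz sequences $\eta_k\to\eta$ with $|\eta_k|=1$ and $\xi_i^{(k)}\to\xi_i$ in $\leb$. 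By Remark \ref{proj3} (the uniform-convergence strengthening of (\ref{repr})) together with continuity of $P_\eta$ in its argument, the left-hand side converges in $\cG$ to $P_\eta P(\lab\cdot,\xi_1\rab,\dots)$; on the right, $P_{\bot,\eta_k}\xi_i^{(k)}\to P_{\bot,\eta}\xi_i$ in $\leb$ since $P_{\bot,\cdot}$ depends continuously (indeed bilinearly up to the projection formula $\xi-\lab\xi,\eta\rab\eta$) on both arguments, and then $P(\lab\cdot,P_{\bot,\eta_k}\xi_i^{(k)}\rab,\dots)\to P(\lab\cdot,P_{\bot,\eta}\xi_i\rab,\dots)$ in $\cG$ because a polynomial in linear functionals depends continuously in the $\cG$-topology on the defining $\leb$-functions (each $\lab\cdot,\zeta\rab$ has $\|\lab\cdot,\zeta\rab\|_q^2 = 2^q|\zeta|^2$, and products are $\cG$-continuous). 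Equating the two limits gives the claim.

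The main obstacle, such as it is, is purely bookkeeping rather than conceptual: one must make sure that the approximation argument is carried out in the right topology — namely $\cG$, not just $L^2(\mu)$ — so that the limits on both sides are justified, and that the combinatorial sum in (\ref{proj2}) is indeed killed down to the single surviving term for a degree-one kernel. Neither of these presents a real difficulty, since the needed continuity of $P_\eta$ (in both the argument and, via Remark \ref{proj3}, the parameter $\eta$) and the $\cG$-continuity of polynomial maps are already in hand. I would therefore expect the write-up to be short: reduce to monomials by Lemma \ref{proj4} and linearity, verify the monomial case by direct inspection of (\ref{proj2}) in the Schwartz setting, and conclude in general by the density/continuity argument just sketched.
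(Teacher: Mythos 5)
Your proof is correct and takes essentially the same route as the paper: verify $P_{\eta}\lab \cdot,\xi_j\rab=\lab \cdot,P_{\bot,\eta}\xi_j\rab$ directly from the defining formula (\ref{proj2}) and then pass to general polynomials by linearity and the multiplicativity of Lemma \ref{proj4}. The paper's proof is exactly those two steps stated in two sentences; your extra density/continuity bookkeeping for $\eta,\xi_i\in\leb$ is a routine elaboration of what the paper leaves implicit in the continuous extension of $P_{\eta}$ to $\cG$.
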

\begin{proof}
It is clear by (\ref{proj2}) that $P_{\eta}\lab \cdot,\xi_j\rab=\lab \cdot,P_{\bot,\eta}\xi_j\rab$ for every $j=1,\ldots,d$. Then the general statement follows from Lemma \ref{proj4}.
\end{proof}

\begin{lem}\label{NM}
Let $1\le p<\infty$ and $M,N\in \R^{d\times d}$ be symmetric with $0<N\le M$, i.e. $0<x^TNx\le x^TMx$ for all $x\in \R^d\setminus \lcb 0\rcb$. Then for all $f\in L^p\lb \R^d,\mu_M;\mathbb{C}\rb$ it holds
\[ \lVb f\rVb_{L^p\lb \R^d,\mu_N;\mathbb{C}\rb} \le \lb \frac{\det M}{\det N}\rb^{1/2p} \lVb f\rVb_{L^p\lb \R^d,\mu_M;\mathbb{C}\rb} .\]
\end{lem}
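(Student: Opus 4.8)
The plan is to compare the two Gaussian densities directly and absorb the discrepancy into a pointwise bound. Write $N^{-1/2}$ and $M^{-1/2}$ for the positive square roots of the inverses of the (positive definite) matrices $N$ and $M$. The Gaussian measure $\mu_N$ has Lebesgue density $p_N(x) = (2\pi)^{-d/2}(\det N)^{-1/2}\exp\lb -\halb x^T N^{-1} x\rb$ on $\R^d$, and similarly for $\mu_M$. Since $0 < N \le M$ implies $M^{-1} \le N^{-1}$ (inversion reverses the Loewner order on positive definite matrices), we get $\halb x^T M^{-1} x \le \halb x^T N^{-1} x$ for every $x\in\R^d$, hence $\exp\lb -\halb x^T N^{-1} x\rb \le \exp\lb -\halb x^T M^{-1} x\rb$. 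Therefore the density ratio satisfies
\[ \frac{p_N(x)}{p_M(x)} = \lb \frac{\det M}{\det N}\rb^{1/2} \exp\lb -\halb x^T N^{-1} x + \halb x^T M^{-1} x\rb \le \lb \frac{\det M}{\det N}\rb^{1/2} \]
for all $x\in\R^d$, which is the key pointwise estimate.

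Next I would simply rewrite the $L^p(\mu_N)$-norm as an integral against $p_M$ and apply this bound. Explicitly,
\[ \lVb f\rVb_{L^p(\R^d,\mu_N;\C)}^p = \int_{\R^d} |f(x)|^p p_N(x)\, dx = \int_{\R^d} |f(x)|^p \frac{p_N(x)}{p_M(x)}\, p_M(x)\, dx \le \frac{\det M}{\det N}^{1/2}\!\!\int_{\R^d} |f(x)|^p p_M(x)\, dx, \]
where the middle step is legitimate because $p_M > 0$ everywhere. The right-hand side equals $(\det M/\det N)^{1/2}\lVb f\rVb_{L^p(\R^d,\mu_M;\C)}^p$, and taking $p$-th roots yields the claimed inequality with the exponent $1/2p$. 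Note that $\det M \ge \det N > 0$ also follows from $0 < N \le M$, so the constant is at least $1$, consistent with the fact that $\mu_N$ is more concentrated than $\mu_M$.

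I do not anticipate a genuine obstacle here; the argument is essentially a change-of-measure computation. The one point requiring a small amount of care is the implication $0 < N \le M \Rightarrow M^{-1} \le N^{-1}$ in the Loewner order, which is a standard fact (e.g.\ via simultaneous diagonalization of the pencil $(N,M)$, or by writing $N^{-1} - M^{-1} = N^{-1}(M - N)M^{-1}$ after suitable symmetrization and checking positivity). I would also remark that the same diagonalization shows $\det N \le \det M$, so everything in sight is well-defined and finite. Beyond that, the proof is just the monotonicity of the Gaussian exponent and Tonelli/Fubini applied to the nonnegative integrand $|f|^p p_M$.
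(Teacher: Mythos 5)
Your proof is correct and is essentially the paper's own argument: both rest on the implication $0<N\le M\Rightarrow M^{-1}\le N^{-1}$, the resulting pointwise comparison of the Gaussian exponents, and the ratio $\sqrt{\det M/\det N}$ of normalizing constants. Phrasing it as a bound on the density ratio $p_N/p_M$ is only a cosmetic repackaging of the same change-of-measure computation.
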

\begin{proof}
Note that $0<N\le M$ implies $0<M^{-1}\le N^{-1}$. Hence
\begin{align*}
\lVb f\rVb_{L^p\lb \R^d,\mu_N;\mathbb{C}\rb}^p&=\frac{1}{\sqrt{(2\pi)^d \det N}}\int_{\R^d}|f|^p\exp\lb -\halb x^TN^{-1}x\rb \, dx\\
&\le \frac{1}{\sqrt{(2\pi)^d \det N}}\int_{\R^d}|f|^p\exp\lb -\halb x^TM^{-1}x\rb \, dx\\
 & =\sqrt{\frac{\det M}{\det N}} \lVb f\rVb_{L^p\lb \R^d,\mu_M;\mathbb{C}\rb}^p
\end{align*}
for all $f\in L^p\lb \R^d,\mu_M;\mathbb{C}\rb$.
\end{proof}

\noindent The following proposition enables us to extend $P_{\eta}$ to classes of subspaces of $L^p(\mu)$ by continuity.
\begin{prop}\label{contproj}
Let $d\in \N$ and $\eta ,\xi_1,\ldots,\xi_d\in \leb$ be linearly independent with $|\eta| =1$. Then there exists $C(\eta,\xi)\in \R$ such that for any $1\le p<\infty$ we have $\lVb P_{\eta}\vi\rVb_{L^p(\mu)}\le C(\eta,\xi) \lVb \vi \rVb_{L^p(\mu)}$ for $\vi \in \cP_{\xi}$, where $\xi=\lb \xi_1,\ldots,\xi_d\rb$. Hence, $P_{\eta}$ extends uniquely to a bounded linear operator from $\overline{\cP_{\xi}}^{L^p(\mu)}$ to $L^p(\mu)$.
\end{prop}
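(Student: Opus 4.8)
The plan is to reduce everything to a statement about the finite-dimensional Gaussian measures $\mu_M$ via the identification in \eqref{closure}, and then to use Lemma \ref{polpw} together with Lemma \ref{NM} to compare norms. First I would observe that since $\eta,\xi_1,\dots,\xi_d$ are linearly independent, a smooth polynomial $\vi=P\lb\lab\cdot,\xi_1\rab,\dots,\lab\cdot,\xi_d\rab\rb\in\cP_\xi$ corresponds, under \eqref{closure}, to the function $P$ viewed as an element of $L^p\lb\R^d,\mu_M;\C\rb$, where $M=\lb\lab\xi_k,\xi_l\rab\rb_{k,l=1}^d$ is strictly positive definite (by linear independence). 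By Lemma \ref{polpw}, $P_\eta\vi=P\lb\lab\cdot,P_{\bot,\eta}\xi_1\rab,\dots,\lab\cdot,P_{\bot,\eta}\xi_d\rab\rb$, which under the analogous identification corresponds to the \emph{same} polynomial $P$ but now regarded as an element of $L^p\lb\R^d,\mu_N;\C\rb$, where $N=\lb\lab P_{\bot,\eta}\xi_k,P_{\bot,\eta}\xi_l\rab\rb_{k,l=1}^d$.

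The next step is to compare $N$ and $M$. Since $P_{\bot,\eta}$ is an orthogonal projection, $\lab P_{\bot,\eta}\xi_k,P_{\bot,\eta}\xi_l\rab=\lab P_{\bot,\eta}\xi_k,\xi_l\rab$, so for any $x=(x_1,\dots,x_d)\in\R^d$ we have $x^TNx=\lvb P_{\bot,\eta}\lb\sum_k x_k\xi_k\rb\rvb^2\le\lvb\sum_k x_k\xi_k\rvb^2=x^TMx$, i.e. $0\le N\le M$. I would then argue $N>0$: if $x^TNx=0$ then $\sum_k x_k\xi_k\in\mathrm{span}\{\eta\}$, and since $\eta,\xi_1,\dots,\xi_d$ are linearly independent this forces $x=0$. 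Hence $0<N\le M$ with both matrices symmetric. Note that $N$ and $M$, and therefore the constant, depend only on $\eta$ and $\xi=(\xi_1,\dots,\xi_d)$, not on $p$ or on the particular polynomial.

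Now Lemma \ref{NM} applies directly: for every $1\le p<\infty$ and every polynomial $P$,
\[
\lVb P_\eta\vi\rVb_{L^p(\mu)}=\lVb P\rVb_{L^p(\R^d,\mu_N;\C)}\le\lb\frac{\det M}{\det N}\rb^{1/2p}\lVb P\rVb_{L^p(\R^d,\mu_M;\C)}=\lb\frac{\det M}{\det N}\rb^{1/2p}\lVb\vi\rVb_{L^p(\mu)}.
\]
Setting $C(\eta,\xi):=\max\lcb 1,\frac{\det M}{\det N}\rcb$ gives the stated bound uniformly in $p\in[1,\infty)$ (the exponent $1/2p\le 1/2$, so $\lb\frac{\det M}{\det N}\rb^{1/2p}\le C(\eta,\xi)$ whenever $\det M/\det N\ge 1$, which holds since $N\le M$). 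Finally, since $\cP_\xi$ is dense in $\overline{\cP_\xi}^{L^p(\mu)}$ by definition and $P_\eta$ is bounded on it for the $L^p(\mu)$-norm, it extends uniquely to a bounded linear operator on $\overline{\cP_\xi}^{L^p(\mu)}$ with the same bound; one should note that this extension is consistent across different $p$ and with the $\cG$-extension, since all these operators agree on the common dense subspace $\cP_\xi$.

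The only genuinely delicate point is the well-definedness and consistency of the identification \eqref{closure} together with the claim that $P_\eta$ acting on $\cP_\xi$ (a priori defined as a map into $\cG\subset L^2(\mu)$) really does land in (and is computed by) the finite-dimensional picture attached to the possibly \emph{smaller} family $P_{\bot,\eta}\xi_1,\dots,P_{\bot,\eta}\xi_d$ — in particular that $\overline{\cP_\xi}^{L^p(\mu)}$ is faithfully represented by $L^p(\R^d,\mu_M;\C)$ even when $M$ is only positive semidefinite would be a problem, which is exactly why linear independence of the $\xi_k$ (hence $M>0$) is assumed. Everything else is a routine chain of identifications and the two lemmas already proved.
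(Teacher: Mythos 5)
Your proposal is correct and follows essentially the same route as the paper: reduce to the finite-dimensional Gaussian picture via \eqref{closure}, apply Lemma \ref{polpw} to identify $P_\eta\vi$ with the same polynomial under the measure $\mu_N$, verify $0<N\le M$ (which you actually justify in more detail than the paper does), and conclude with Lemma \ref{NM} and density; your constant $\max\{1,\det M/\det N\}$ differs harmlessly from the paper's $\sqrt{\det M/\det N}=\sup_{1\le p<\infty}(\det M/\det N)^{1/2p}$.
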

\begin{proof}
The matrices $M:=\lb \lab \xi_k,\xi_l\rab\rb_{k,l=1,\ldots,d}$ and $N:=\lb \lab P_{\bot,\eta}\xi_k,P_{\bot,\eta}\xi_l\rab\rb_{k,l=1,\ldots,d}$ are the covariance matrices of the Gaussian vectors $\lb \lab \cdot,\xi_k\rab\rb_{k=1,\ldots,d}$ \linebreak and $\lb \lab \cdot,P_{\bot,\eta}\xi_k\rab\rb_{k=1,\ldots,d}$, respectively. Linear independence and the fact that $\|P_{\bot,\eta}\|_{\cL(\leb)}=1$ yields $0<N\le M$. Then for $1\le p<\infty$ and a polynomial $P$ we can estimate using Lemma \ref{polpw} and Lemma \ref{NM}
\begin{align*}
&\lVb P_{\eta}P\lb \lab \cdot,\xi_1 \rab,\ldots,\lab \cdot,\xi_d \rab\rb \rVb_{L^p(\mu)}^p\\
&=\lVb P\lb  \lab \cdot,P_{\bot,\eta}\xi_1 \rab,\ldots,\lab \cdot,P_{\bot,\eta}\xi_d \rab \rb \rVb_{L^p(\mu)}^p=\lVb P\rVb_{L^p\lb \R^d,\mu_N;\C\rb}^p\\
&\le \sqrt{\frac{\det M}{\det N}} \lVb P\rVb_{L^p\lb \R^d,\mu_M;\C\rb}^p=\sqrt{\frac{\det M}{\det N}}\lVb P\lb \lab \cdot,\xi_1 \rab,\ldots,\lab \cdot,\xi_d \rab\rb \rVb_{L^p(\mu)}^p,  
\end{align*}
which shows the assertion. The fact that $C(\eta,\xi)$ can be chosen independently of $p$ can be seen by $C(\eta,\xi)=\sup_{1\le p<\infty}\lb \frac{\det M}{\det N}\rb^{1/2p}=\sqrt{\frac{\det M}{\det N}}$.
\end{proof}

The following characterizes the extension of $P_{\eta}$ provided by Proposition \ref{contproj} by generalizing Lemma \ref{polpw}.
\begin{lem}\label{projpw}
Let $\eta, \xi, p, M, N$ be as in Proposition \ref{contproj}. Then for all $f\in L^p\lb \R^d,\mu_M;\C\rb$ we have
\[  P_{\eta}f\lb \lab \cdot,\xi_1\rab ,\ldots, \lab \cdot,\xi_d\rab\rb=f\lb \lab \cdot,P_{\bot,\eta}\xi_1\rab ,\ldots, \lab \cdot,P_{\bot,\eta}\xi_d\rab\rb .\]
\end{lem}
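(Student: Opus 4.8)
The plan is to prove the identity first on the dense subspace $\cP_\xi$ and then pass to the limit, using the continuity of $P_\eta$ established in Proposition \ref{contproj} together with a matching continuity statement for the substitution map on the right-hand side. First I would observe that both sides of the claimed identity define linear operators on $\overline{\cP_\xi}^{L^p(\mu)}$: the left-hand side is the extension of $P_\eta$ from Proposition \ref{contproj}, and the right-hand side is the composition of $f \mapsto f(\lab\cdot,P_{\bot,\eta}\xi_1\rab,\ldots,\lab\cdot,P_{\bot,\eta}\xi_d\rab)$ with the isometric-type identification $(\ref{closure})$. On the polynomial core $\cP_\xi$ the two sides agree by Lemma \ref{polpw}. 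So it suffices to check that the right-hand side is continuous from $L^p(\R^d,\mu_M;\C)$ to $L^p(\mu)$, equivalently from $\overline{\cP_\xi}^{L^p(\mu)}$ to $L^p(\mu)$.

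For that continuity I would use exactly the computation already carried out in the proof of Proposition \ref{contproj}: if $f\in L^p(\R^d,\mu_M;\C)$, then by $(\ref{closure})$ the function $g:=f(\lab\cdot,P_{\bot,\eta}\xi_1\rab,\ldots,\lab\cdot,P_{\bot,\eta}\xi_d\rab)$ lies in $\overline{\cP_{P_{\bot,\eta}\xi}}^{L^p(\mu)}$ and
\[
\lVb g\rVb_{L^p(\mu)} = \lVb f\rVb_{L^p(\R^d,\mu_N;\C)} \le \lb \frac{\det M}{\det N}\rb^{1/2p}\lVb f\rVb_{L^p(\R^d,\mu_M;\C)} = \lb\frac{\det M}{\det N}\rb^{1/2p}\lVb f(\lab\cdot,\xi_1\rab,\ldots,\lab\cdot,\xi_d\rab)\rVb_{L^p(\mu)},
\]
where the first equality is because $N$ is the covariance matrix of $(\lab\cdot,P_{\bot,\eta}\xi_k\rab)_k$, the inequality is Lemma \ref{NM} (note $0<N\le M$ as in Proposition \ref{contproj}), and the last equality uses $(\ref{closure})$ again with covariance $M$. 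Hence the map $f\mapsto g$ is bounded with the same constant $C(\eta,\xi)$ as $P_\eta$.

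Now I would conclude by a standard density argument. Given $f\in L^p(\R^d,\mu_M;\C)$, choose polynomials $P_k$ with $\lVb P_k - f\rVb_{L^p(\R^d,\mu_M;\C)}\to 0$; then by $(\ref{closure})$ the functions $\vi_k:=P_k(\lab\cdot,\xi_1\rab,\ldots,\lab\cdot,\xi_d\rab)$ converge in $L^p(\mu)$ to $f(\lab\cdot,\xi_1\rab,\ldots,\lab\cdot,\xi_d\rab)$. Applying the extended operator $P_\eta$ (continuous by Proposition \ref{contproj}) gives $P_\eta\vi_k \to P_\eta f(\lab\cdot,\xi_1\rab,\ldots,\lab\cdot,\xi_d\rab)$ in $L^p(\mu)$, while by Lemma \ref{polpw} each $P_\eta\vi_k$ equals $P_k(\lab\cdot,P_{\bot,\eta}\xi_1\rab,\ldots,\lab\cdot,P_{\bot,\eta}\xi_d\rab)$, and by the boundedness just established these converge in $L^p(\mu)$ to $f(\lab\cdot,P_{\bot,\eta}\xi_1\rab,\ldots,\lab\cdot,P_{\bot,\eta}\xi_d\rab)$. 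Matching the two limits yields the claim. The only point requiring a little care — the main obstacle, such as it is — is making sure the right-hand side is genuinely well-defined as an element of $L^p(\mu)$ for arbitrary $f$ in the larger space $L^p(\R^d,\mu_M;\C)$ rather than only for polynomials; this is precisely what the norm estimate above secures, since $P_{\bot,\eta}\xi_1,\ldots,P_{\bot,\eta}\xi_d$ are still linearly independent (their Gram matrix $N$ is positive definite), so $(\ref{closure})$ applies to the covariance $N$ as well.
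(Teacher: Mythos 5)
Your proposal is correct and follows essentially the same route as the paper: approximate $f$ by polynomials in $L^p\lb\R^d,\mu_M;\C\rb$, apply Lemma \ref{polpw} on the polynomial core, and pass to the limit using Proposition \ref{contproj} for the left-hand side and Lemma \ref{NM} (i.e.\ convergence also in $L^p\lb\R^d,\mu_N;\C\rb$) for the right-hand side. The only difference is that you make the continuity of the substitution map $f\mapsto f\lb\lab\cdot,P_{\bot,\eta}\xi_1\rab,\ldots,\lab\cdot,P_{\bot,\eta}\xi_d\rab\rb$ fully explicit, which the paper leaves implicit in the phrase ``we also have convergence in $L^p\lb\R^d,\mu_N;\C\rb$.''
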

\begin{proof}
Let $(P_n)_{n\in \N}$ be a sequence of polynomials such that $\lim_{n\to \infty}P_n=f$ in $L^p\lb \R^d,\mu_M;\C\rb$. By Lemma \ref{NM} we also have convergence in $L^p\lb \R^d,\mu_N;\C\rb$. Then Lemma \ref{polpw} and Proposition \ref{contproj} imply
\begin{align*}
 P_{\eta}f\lb \lab \cdot,\xi_1\rab ,\ldots, \lab \cdot,\xi_d\rab\rb&=\lim_{n\to \infty}P_{\eta}P_n\lb \lab \cdot,\xi_1\rab ,\ldots, \lab \cdot,\xi_d\rab\rb\\
 & =\lim_{n\to \infty}P_n\lb \lab \cdot,P_{\bot,\eta}\xi_1\rab ,\ldots, \lab \cdot,P_{\bot,\eta}\xi_d\rab\rb\\
&=f\lb \lab \cdot,P_{\bot,\eta}\xi_1\rab ,\ldots, \lab \cdot,P_{\bot,\eta}\xi_d\rab\rb
\end{align*}
in $L^p(\mu)$.
\end{proof}

\subsection{Pointwise Product with Donsker's Delta Function}

A useful formula for the pointwise product of Donsker's delta function with elements from $\cG$ is the following.
\begin{thm}\label{Wickformula}
Let $\eta \in \leb \setminus{\{0\}}$. Then
\begin{equation}\label{wickformel} 
\de\lb \lab \cdot,\eta\rab\rb \cdot \vi =\de\lb \lab \cdot,\eta\rab\rb \wick P_{\frac{\eta}{|\eta|}}\vi ,
\end{equation}
for all $\vi \in \cG$.
\end{thm}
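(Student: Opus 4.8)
The plan is to verify the identity (\ref{wickformel}) by comparing $S$-transforms, since an element of $\cG'$ is uniquely determined by its $S$-transform and $S$ turns the Wick product into an ordinary product. First I would reduce to the normalized case: writing $\eta = |\eta|\,\tilde\eta$ with $|\tilde\eta|=1$, the homogeneity of degree $-1$ of Donsker's delta gives $\de(\lab\cdot,\eta\rab) = \frac{1}{|\eta|}\de(\lab\cdot,\tilde\eta\rab)$, so it suffices to prove $\de(\lab\cdot,\eta\rab)\cdot\vi = \de(\lab\cdot,\eta\rab)\wick P_\eta\vi$ for $|\eta|=1$. By continuity of both sides in $\vi\in\cG$ (pointwise multiplication with a fixed element of $\cG'$ is continuous $\cG\to\cG'$, $P_\eta$ is continuous $\cG\to\cG$, and $\wick$ is continuous), and since $\cP$ is dense in $\cG$, it is enough to establish the formula for $\vi\in\cP$; and by linearity and a further density/continuity argument it suffices to treat Wick monomials, or even more conveniently Wick exponentials $\vi = \,:\!\exp(\lab\cdot,\zeta\rab)\!: $ for $\zeta\in\mathcal{S}(\R;\C)$, which are total in $\cG$.

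Next I would compute both sides explicitly on such a test functional. For the left-hand side, $\de(\lab\cdot,\eta\rab)\cdot\vi$ is the element of $\cG'$ defined by $\lAb \de(\lab\cdot,\eta\rab)\cdot\vi,\psi\rAb = \lAb\de(\lab\cdot,\eta\rab),\vi\cdot\psi\rAb$; taking $\psi = \,:\!\exp(\lab\cdot,\xi\rab)\!:$ and using that the pointwise product of two Wick exponentials is $\,:\!\exp(\lab\cdot,\zeta+\xi\rab)\!:\cdot\,\exp(\lab\zeta,\xi\rab)$, the $S$-transform evaluates via the known formula for $S\de(\lab\cdot,\eta\rab)$ applied at argument $\zeta+\xi$. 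For the right-hand side I would use Lemma \ref{projpw} (or directly (\ref{proj2}) on the Wick-exponential kernels, noting $\,:\!\exp(\lab\cdot,\zeta\rab)\!: = \,:\!\exp(\lab\cdot,P_{\bot,\eta}\zeta\rab + \lab\zeta,\eta\rab\lab\cdot,\eta\rab)\!:\,$ formally) to see that $P_\eta\,:\!\exp(\lab\cdot,\zeta\rab)\!: \,=\, :\!\exp(\lab\cdot,P_{\bot,\eta}\zeta\rab)\!:\,\exp\bigl(\tfrac12\lab\zeta,\eta\rab^2\bigr)$, and then $S(\de(\lab\cdot,\eta\rab)\wick P_\eta\vi)(\xi) = S\de(\lab\cdot,\eta\rab)(\xi)\cdot S(P_\eta\vi)(\xi)$. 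Then I would expand both resulting scalar expressions and check they agree as functions of $\xi\in L^2(\R;\C)$; this is a Gaussian bookkeeping exercise in which the quadratic forms in $\xi$ on the two sides have to match after using $\lab\xi,\eta\rab$ and $P_{\bot,\eta}\xi = \xi - \lab\xi,\eta\rab\eta$, together with $\lab\eta,\eta\rab=1$.

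The main obstacle I anticipate is twofold. First, one must be careful that the reduction to Wick exponentials is legitimate on the distribution side: the map $\vi\mapsto \de(\lab\cdot,\eta\rab)\cdot\vi$ is only continuous from $\cG$ to $\cG'$, not to $\cG$, so the density argument must be run at the level of the $\cG'$-valued equality, testing against a fixed but arbitrary $\psi\in\cG$ (again reduced to a Wick exponential by totality). Once that is in place the identity becomes one of scalar-valued analytic functions and uniqueness of the $S$-transform closes the argument. Second — and this is the genuinely delicate computational point — one has to get the projection formula for $P_\eta$ applied to a Wick exponential exactly right, including the scalar prefactor $\exp(\tfrac12\lab\zeta,\eta\rab^2)$ coming from resumming the double sum in (\ref{proj2}) (the $k$-sum produces a Hermite-type generating series), since it is precisely this prefactor that compensates the cross term $\exp(\lab\xi,\eta\rab\lab\zeta,\eta\rab)$ appearing when one pins $\lab\cdot,\eta\rab$ at $0$ on the left-hand side. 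After those two points, matching the exponents is routine algebra.
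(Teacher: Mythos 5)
Your overall strategy is sound, and in fact the paper does not prove this theorem at all: it is quoted from the literature (\cite{GSV11} and \cite[Theorem 4.24]{Vog10}), so there is no in-paper argument to compare against. An $S$-transform verification on Wick exponentials of exactly the kind you outline is the natural (and essentially the standard) route. The reduction to $|\eta|=1$ via homogeneity of Donsker's delta, the reduction to Wick exponentials by totality together with continuity of $\vi\mapsto\de\lb\lab\cdot,\eta\rab\rb\cdot\vi$ from $\cG$ to $\cG'$, of $P_{\eta}$ on $\cG$ and of $\wick$ on $\cG'\times\cG'$, and the final appeal to injectivity of the $S$-transform are all legitimate; you are also right that the equality must be tested in $\cG'$ against a fixed $\psi$, which is exactly what evaluating the $S$-transform does.

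The one concrete error is the sign of your prefactor. Resumming (\ref{proj2}) with kernels $\vi^{(n)}=\frac{1}{n!}\zeta^{\otimes n}$ and using $\eta^{\otimes 2k}\otimes_{2k}\zeta^{\otimes n}=\lab\eta,\zeta\rab^{2k}\zeta^{\otimes(n-2k)}$, the $k$-sum is $\sum_{k}\frac{(-1)^k}{k!\,2^k}\lab\zeta,\eta\rab^{2k}=\exp\lb-\halb\lab\zeta,\eta\rab^2\rb$, so
\begin{equation*}
P_{\eta}\,{:}\exp\lb\lab\cdot,\zeta\rab\rb{:}\;=\;\exp\lb-\halb\lab\zeta,\eta\rab^2\rb\,{:}\exp\lb\lab\cdot,P_{\bot,\eta}\zeta\rab\rb{:}\,,
\end{equation*}
with a \emph{minus} sign; this is consistent with the heuristic substitution $\om\mapsto\om-\lab\om,\eta\rab\eta$ in $\exp\lb\lab\om,\zeta\rab-\halb\lab\zeta,\zeta\rab\rb$ together with $\lab\zeta,\zeta\rab=\lab P_{\bot,\eta}\zeta,P_{\bot,\eta}\zeta\rab+\lab\zeta,\eta\rab^2$. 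With your prefactor $\exp\lb+\halb\lab\zeta,\eta\rab^2\rb$ the two sides of (\ref{wickformel}) would differ by the factor $\exp\lb\lab\zeta,\eta\rab^2\rb$ and the verification fails; with the correct sign both $S$-transforms equal $\frac{1}{\sqrt{2\pi}}\exp\lb\lab\zeta,\xi\rab-\halb\lab\zeta,\eta\rab^2-\halb\lab\xi,\eta\rab^2-\lab\zeta,\eta\rab\lab\xi,\eta\rab\rb$. Note also that it is the term $-\lab\zeta,\eta\rab\lab\xi,\eta\rab$ coming from $\lab P_{\bot,\eta}\zeta,\xi\rab$, not the scalar prefactor, that matches the cross term in $-\halb\lb\lab\zeta,\eta\rab+\lab\xi,\eta\rab\rb^2$ on the left-hand side; the prefactor accounts for the pure $-\halb\lab\zeta,\eta\rab^2$ term. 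Since you explicitly defer the exponent matching to "routine algebra", the slip would surface and be corrected there, but as stated the key intermediate formula is wrong.
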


\noindent It was discovered in \cite{GSV11} and treated systematically in \cite{Vog10}, see e.g. \cite[Theorem 4.24]{Vog10}. Unfortunately, it is not always easy to check whether a given white noise function $\vi \in L^2(\mu)$ is from $\cG$. Moreover, the representation of $P_{\frac{\eta}{|\eta|}}\vi$ from Lemma \ref{projpw} does not apply to general $\vi\in \cG$. So we need to make a refinement of this theorem which is applicable to our problem. 
\begin{thm}[Wick Formula]\label{refWick}
Let $\eta,\xi_1,\ldots,\xi_d \in L^2(\R)$ be linearly independent and set $\xi:=\lb \xi_1,\ldots,\xi_d\rb$. Then for each $1<p<\infty$ the linear operator
\begin{equation}\label{refwick1}
\cP_{\xi}\ni \vi \mapsto \de\lb \lab \cdot,\eta\rab\rb \cdot \vi  \in \cG'
\end{equation}
has a unique continuous extension to $\overline{\cP_{\xi}}^{L^p(\mu)}$. It is given by
\begin{equation}\label{refwick2}
\de\lb \lab \cdot,\eta\rab\rb \cdot \vi  =\de\lb \lab \cdot,\eta\rab\rb \wick f\lb \lab \cdot,P_{\bot,\frac{\eta}{|\eta|}}\xi_1\rab ,\ldots, \lab \cdot,P_{\bot,\frac{\eta}{|\eta|}}\xi_d\rab\rb
\end{equation}
for $\vi=f\lb \lab \cdot,\xi_1\rab ,\ldots, \lab \cdot,\xi_d\rab\rb \in \overline{\cP_{\xi}}^{L^p(\mu)}$.
\end{thm}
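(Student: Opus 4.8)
The plan is to combine the continuity of the map on polynomials established via Proposition~\ref{contproj} with the density of $\cP_{\xi}$ in $\overline{\cP_{\xi}}^{L^p(\mu)}$, and then identify the limit using the original Wick formula (Theorem~\ref{Wickformula}) together with the pointwise representation of $P_{\eta}$ from Lemma~\ref{projpw}. First I would observe that for $\vi \in \cP_{\xi}$ we have $\vi \in \cP \subset \cG$, so Theorem~\ref{Wickformula} applies and gives $\de(\lab \cdot,\eta\rab)\cdot \vi = \de(\lab \cdot,\eta\rab)\wick P_{\eta/|\eta|}\vi$; by Lemma~\ref{polpw}, if $\vi = P(\lab \cdot,\xi_1\rab,\ldots,\lab \cdot,\xi_d\rab)$ then $P_{\eta/|\eta|}\vi = P(\lab \cdot,P_{\bot,\eta/|\eta|}\xi_1\rab,\ldots,\lab \cdot,P_{\bot,\eta/|\eta|}\xi_d\rab)$, which is exactly the right-hand side of (\ref{refwick2}) for polynomial $f$. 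So (\ref{refwick2}) already holds on the dense subspace $\cP_{\xi}$, and it only remains to show both sides extend continuously and agree on all of $\overline{\cP_{\xi}}^{L^p(\mu)}$.

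For the continuity, I would argue as follows. Fix $1<p<\infty$ and let $p'$ be its conjugate exponent. Since $1<p'\le\infty$, the inclusion $L^{p'}(\mu)\subset \cG'$ is continuous, so Donsker's delta $\de(\lab \cdot,\eta\rab)\in \cG'$; more precisely, as a regular distribution, the pairing $\lAb \de(\lab \cdot,\eta\rab)\cdot \vi,\psi\rAb = \lAb \de(\lab \cdot,\eta\rab),\vi\psi\rAb$ can be controlled. The cleanest route is to bound the right-hand side of (\ref{refwick2}): by Proposition~\ref{contproj}, the map $\vi \mapsto P_{\eta/|\eta|}\vi$ is bounded from $\overline{\cP_{\xi}}^{L^p(\mu)}$ into $L^p(\mu)$ with norm $C(\eta,\xi)=\sqrt{\det M/\det N}$ (here $M,N$ as in that proposition), and by Lemma~\ref{projpw} this extension sends $f(\lab \cdot,\xi_1\rab,\ldots)$ to $f(\lab \cdot,P_{\bot,\eta/|\eta|}\xi_1\rab,\ldots)$, so the object in (\ref{refwick2}) lies in $\overline{\cP_{P_{\bot}\xi}}^{L^p(\mu)}\subset L^p(\mu)$. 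Then $\de(\lab \cdot,\eta\rab)\wick(\cdot)$ must be seen as a continuous map from $L^p(\mu)$ into $\cG'$; since $\de(\lab \cdot,\eta\rab)\in L^{p'}(\mu)$ for every $p'<\infty$ (Donsker's delta has finite $L^{p'}(\mu)$ norm — its chaos kernels are $\xi^{\otimes n}/n!$-type with $\xi = \eta/|\eta|^2$ suitably interpreted, and one checks $\|\de(\lab \cdot,\eta\rab)\|_{L^{p'}(\mu)}<\infty$), and Wick multiplication by a fixed element of $\bigcup_{q>1}L^q(\mu)\subset \cG'$ is continuous from $L^p(\mu)$ into $\cG'$ by the continuity of $\wick$ on $\cG'\times\cG'$ combined with continuity of $L^p(\mu)\hookrightarrow \cG'$. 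Composing these bounded maps yields that the right-hand side of (\ref{refwick2}) defines a continuous linear map $\overline{\cP_{\xi}}^{L^p(\mu)}\to \cG'$.

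Having a continuous map on $\overline{\cP_{\xi}}^{L^p(\mu)}$ agreeing with $\vi\mapsto \de(\lab \cdot,\eta\rab)\cdot\vi$ on the dense subspace $\cP_{\xi}$, uniqueness of continuous extensions finishes the argument: the stated map is the unique continuous extension, and formula (\ref{refwick2}) is its explicit form. The main obstacle I anticipate is the second paragraph: carefully justifying that $\de(\lab \cdot,\eta\rab)$ lies in $L^{p'}(\mu)$ for $p'$ conjugate to $p$ (this is where $p=1$ is excluded, since then $p'=\infty$ and $\de(\lab \cdot,\eta\rab)\notin L^\infty(\mu)$), and that Wick multiplication by it is continuous $L^p(\mu)\to \cG'$ — the subtlety is that Wick product is only asserted continuous on $\cG'\times\cG'$, so one must chase through the inclusions $L^p(\mu)\subset\cG'$ and $L^{p'}(\mu)\subset\cG'$ and check the composed estimate is genuinely bounded in the $\cG'$ (inductive-limit) topology. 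An alternative, possibly smoother, route for this step is to test against Wick exponentials: show directly that $S(\de(\lab \cdot,\eta\rab)\cdot\vi)$ depends continuously on $\vi$ in $L^p(\mu)$ for each fixed $\xi$-argument, using the explicit $S$-transform of Donsker's delta and the fact that $\lab \cdot,\xi_j\rab\in \bigcap_{q<\infty}L^q(\mu)$, so that Hölder's inequality controls $\E_\mu(\de(\lab \cdot,\eta\rab)\,\vi\,:\!\exp(\lab \cdot,\zeta\rab)\!:)$; uniqueness of the $S$-transform then pins down the extension.
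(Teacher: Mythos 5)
Your overall strategy is exactly the paper's: establish the identity on $\cP_{\xi}$ via Theorem~\ref{Wickformula} and Lemma~\ref{polpw}, use Proposition~\ref{contproj} to get continuity of $P_{\eta/|\eta|}$ from $\overline{\cP_{\xi}}^{L^p(\mu)}$ into $L^p(\mu)$, identify the extension via Lemma~\ref{projpw}, and conclude by density. However, the second paragraph contains a false assertion that you flag as ``the main obstacle'': Donsker's delta $\de\lb \lab \cdot,\eta\rab\rb$ does \emph{not} lie in $L^{p'}(\mu)$ for any $p'$. It is a genuine generalized function: its chaos kernels are proportional to $\eta^{\otimes 2k}/(k!(2\lab\eta,\eta\rab)^k)$, so $(2k)!\,|f^{(2k)}|^2 \sim \binom{2k}{k}4^{-k}\sim k^{-1/2}$ and the $L^2(\mu)$-norm already diverges; as a positive distribution it corresponds to a measure singular with respect to $\mu$ (supported on the null set $\lcb\lab\cdot,\eta\rab=0\rcb$), so it is not in $L^{p'}(\mu)$ for any $p'\ge 1$. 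Consequently the route ``Wick multiplication by a fixed element of $\bigcup_{q>1}L^q(\mu)$'' is not available here, and the exclusion of $p=1$ has nothing to do with $\de\lb\lab\cdot,\eta\rab\rb\notin L^\infty(\mu)$; it comes from the fact that only $\bigcup_{1<p\le\infty}L^p(\mu)$, not $L^1(\mu)$, embeds continuously into $\cG'$.

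The repair is trivial and is in fact already present in your own text: Wick multiplication by a fixed element of $\cG'$ is continuous from $\cG'$ to $\cG'$ (continuity of $\wick$ on $\cG'\times\cG'$ requires nothing more than membership in $\cG'$, which Donsker's delta has), and composing with the continuous embedding $L^p(\mu)\hookrightarrow\cG'$ for $p>1$ gives continuity of $\vi\mapsto\de\lb\lab\cdot,\eta\rab\rb\wick P_{\eta/|\eta|}\vi$ from $\overline{\cP_{\xi}}^{L^p(\mu)}$ to $\cG'$. This is precisely the paper's one-line argument. Delete the $L^{p'}(\mu)$ detour and the alternative $S$-transform route (which is unnecessary), and your proof coincides with the published one.
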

\begin{proof}
The operator $P_{\frac{\eta}{|\eta|}}$ is continuous from $\overline{\cP_{\xi}}^{L^p(\mu)}$ to $L^p(\mu)$, which is continuously embedded in $\cG'$, and the Wick product acts continuously from $\cG' \times \cG'$ to $\cG'$. Hence, the existence of a unique extension follows from Theorem \ref{Wickformula} and density of $\cP_{\xi}$ in $\overline{\cP_{\xi}}^{L^p(\mu)}$. Then (\ref{refwick2}) follows from Lemma \ref{projpw}.
\end{proof}

\begin{rem}
The condition on linear independence of $\eta,\xi_1,\ldots,\xi_d$ in Theorem \ref{refWick} (and also in Proposition \ref{contproj} and Lemma \ref{projpw} before) can actually be relaxed to the condition $\eta \notin \mathrm{span}\lcb \xi_1,\ldots,\xi_d\rcb$. This follows from the fact that there exists $m\in \N$ and $\zeta=(\zeta_1,\ldots,\zeta_m)\in \leb^m$ such that $\zeta_1,\ldots,\zeta_m$ is a linear basis  of $\mathrm{span}\lcb \xi_1,\ldots,\xi_d\rcb$ and thus we have that $\eta,\zeta_1,\ldots,\zeta_d$ is linearly independent with $\cP_{\zeta}=\cP_{\xi}$.
\end{rem}

\begin{exam}\label{BMBB}
If $0\le t\le T$, $T\neq 0$, and $\eta=\frac{1}{\sqrt{T}}\1_{[0,T)}$, then $P_{\bot,\eta}\1_{[0,t)}=\1_{[0,t)}-\frac{t}{T}\1_{[0,T)}$. Hence
\[ P_{\eta}B_t=\lab \cdot , P_{\bot,\eta}\1_{[0,t)}\rab=\lab \cdot , \1_{[0,t)}-\frac{t}{T}\1_{[0,T)}\rab =B_t-\frac{t}{T}B_T=:X_t .\]
Since $(B_t)_{t\in [0,T]}=\lb \lab \cdot , \1_{[0,t)}\rab \rb_{t\in [0,T]}$ is a standard Brownian motion (starting at $0$), $(X_t)_{t\in [0,T]}$  is a Brownian bridge starting and ending in $0$. Thus it follows from Theorem \ref{refWick} applied to $\xi_j=\1_{[0,t_j)}$ that for fixed $0<t_1<\cdots<t_d<T$ we have
\[ \de(B_T)\cdot f(B_{t_1},\ldots,B_{t_d})=\de(B_T)\wick f(X_{t_1},\ldots,X_{t_d}),\]
for any measurable $f:\R^d\to \C$ with $f(B_{t_1},\ldots,B_{t_d})\in L^p(\mu)$ for some $1<p<\infty$.
\end{exam}

\section{Self-intersection Local Time of the Brownian Bridge}

Let $B=(B_t)_{t\in [0,T]}=\lb \lab \cdot, \1_{[0,t)}\rab\rb_{t\in [0,T]}$, $0<T<\infty$, i.e. a one-dimensional standard Brownian motion, see Example \ref{BMBB}. Let $a,b\in \R$ and let us consider
\begin{equation}
X_t:=a\lb 1-\frac{t}{T}\rb +b\frac{t}{T}+ B_t-\frac{t}{T}B_T=a\lb 1-\frac{t}{T}\rb +b\frac{t}{T}+\lab \cdot, \1_{[0,t)}-\frac{t}{T}\1_{[0,T)}\rab
\end{equation}
for $0\le t\le T$, i.e. a one-dimensional Brownian bridge from $a$ to $b$ on $[0,T]$ on the white noise space. It can be verified easily that $(X_t)_{t\in [0,T]}$ is a Gaussian process with mean function 
\[
\E(X_t):=\int_{\td}a\lb 1-\frac{t}{T}\rb +b\frac{t}{T}+\lab \om, \1_{[0,t)}-\frac{t}{T}\1_{[0,T)}\rab \, d\mu(\om)=a\lb 1-\frac{t}{T}\rb +b\frac{t}{T}
\]
for $0\le t\le T$ and covariance function
\begin{align*}\label{BB3}
\mathrm{cov}(X_s,X_t)&:=\int_{\td}\lab \om, \1_{[0,s)}-\frac{s}{T}\1_{[0,T)}\rab\, \lab \om, \1_{[0,t)}-\frac{t}{T}\1_{[0,T)}\rab \, d\mu(\om)\\
&=s\wedge t-\frac{st}{T}=\frac{s\wedge t}{T} \lb T-(s\vee t) \rb, \quad 0\le s,t\le T . 
\end{align*}
In the following we define the variance of $X_t$ as $\var(X_t):=\mathrm{cov}(X_t,X_t)$. We define self-intersection local times of Brownian bridge during the time interval $[0,T]$ by
\begin{equation}\label{BB7}
\mathcal{I}^{BB}:=\int_0^T\int_0^t \de(X_t-X_s)\, ds dt,
\end{equation}
where $\de$ denotes the Dirac delta distribution at $0$. $\mathcal{I}^{BB}$ is interpreted as the amount of time the sample path of Brownian bridge $X$ spends intersect itself within the time interval $[0,T]$. It can be proved using the characterization of Hida distributions and the analysis in Hida spaces that for any spatial dimension (the renormalized) $\mathcal{I}^{BB}$ exists as a Hida distribution. The space of Hida distributions is larger than $\cG'$, see \cite{PT95} for more information. For dimension one it is also possible to give mathematically rigorous meaning to $\mathcal{I}^{BB}$ as a square-integrable function by using an approximation procedure. One common way to do this is by approximating the Dirac delta distribution. More precisely, we interpret ($\ref{BB7}$) as the limit of the approximated self-intersection local times $\mathcal{I}_{\ep}^{BB}$ of a one-dimensional Brownian bridge $X$ defined by
\[ \mathcal{I}_{\ep}^{BB} :=\int_0^T\int_0^t p_{\ep}(X_t-X_s)\, ds\, dt, \quad \ep >0, \]
as $\ep \to 0$, where $p_{\ep}$ is the heat kernel given by
\[ p_{\ep}(x)=\frac{1}{\sqrt{2\pi \ep}}\exp\lb -\frac{x^2}{2\ep}\rb, \quad x\in \R. \]

\begin{thm}\label{SILT1DBB}
The approximated self-intersection local time $\mathcal{I}_{\ep}^{BB}$ of a one-dimensional Brownian bridge $X$ converges in $L^2(\mu)$ as $\ep$ tends to zero, i.e.
\[ \lim_{\ep \downarrow 0}\mathcal{I}_{\ep}^{BB} =:\mathcal{I}^{BB}\in L^2(\mu). \] 
\end{thm}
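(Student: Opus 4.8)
The plan is to verify the Cauchy criterion in the Hilbert space $L^2(\mu)$ by an explicit second-moment computation. For $0\le s<t\le T$ the increment $X_t-X_s=\lab\cdot,\1_{[s,t)}-\tfrac{t-s}{T}\1_{[0,T)}\rab$ is a centered Gaussian variable with variance $\sigma^2(s,t):=\var(X_t-X_s)=(t-s)\lb 1-\tfrac{t-s}{T}\rb$; for a second pair $0\le s'<t'\le T$ let $\rho=\rho(s,t,s',t')$ be the covariance of $X_t-X_s$ and $X_{t'}-X_{s'}$ and $\Sigma$ the resulting $2\times 2$ covariance matrix. Inserting the Fourier representation $p_{\ep}(x)=\tfrac{1}{2\pi}\int_{\R}e^{i\lam x-\ep\lam^2/2}\,d\lam$ and performing the two-dimensional Gaussian integral gives, for $\ep,\ep'>0$,
\[
\E_{\mu}\!\lb p_{\ep}(X_t-X_s)\,p_{\ep'}(X_{t'}-X_{s'})\rb=\frac{1}{2\pi\,\sqrt{\det\lb\Sigma+\mathrm{diag}(\ep,\ep')\rb}},
\]
which is legitimate because $\Sigma+\mathrm{diag}(\ep,\ep')$ is strictly positive definite. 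Since $p_{\ep}\ge 0$, Tonelli's theorem lets us integrate in $(s,t,s',t')$ inside the expectation, so with $\Delta_T:=\lcb(s,t):0<s<t<T\rcb$,
\[
\E_{\mu}\!\lb \mathcal{I}_{\ep}^{BB}\,\mathcal{I}_{\ep'}^{BB}\rb=\frac{1}{2\pi}\int_{\Delta_T}\!\int_{\Delta_T}\frac{ds\,dt\,ds'\,dt'}{\sqrt{\det\lb\Sigma+\mathrm{diag}(\ep,\ep')\rb}} ;
\]
taking $\ep'=\ep$ shows in particular that each $\mathcal{I}_{\ep}^{BB}$ lies in $L^2(\mu)$. (The same identity can be obtained through the chaos decomposition of $\mathcal{I}_{\ep}^{BB}$: only even kernels survive, and resumming $\sum_{m}(2m)!\,|F^{(2m)}_{\ep}|^2$ with the generating function $\sum_m\binom{2m}{m}4^{-m}u^m=(1-u)^{-1/2}$ reproduces the formula above.)

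The crucial analytic input is the finiteness of the $\ep=0$ version,
\[
\mathcal{J}:=\int_{\Delta_T}\!\int_{\Delta_T}\frac{ds\,dt\,ds'\,dt'}{\sqrt{\det\Sigma(s,t,s',t')}}<\infty .
\]
Note that $\det\Sigma$ is the Gram determinant of the two functions $\1_{[s,t)}-\tfrac{t-s}{T}\1_{[0,T)}$ and $\1_{[s',t')}-\tfrac{t'-s'}{T}\1_{[0,T)}$, so it is nonnegative and vanishes only on a Lebesgue-null set (where the two functions are linearly dependent); hence the integrand is finite almost everywhere. To estimate $\mathcal{J}$ I would split $\Delta_T\times\Delta_T$ according to the relative order of the four points $s,t,s',t'$. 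Writing $\Delta=t-s$, $\Delta'=t'-s'$ and $\ell=|[s,t)\cap[s',t')|$, a direct computation gives $\det\Sigma=\Delta\Delta'-\ell^2-\tfrac{\Delta\Delta'}{T}(\Delta+\Delta'-2\ell)$, which specializes to $\Delta\Delta'\lb 1-\tfrac{\Delta+\Delta'}{T}\rb$ for disjoint intervals, to $\Delta'(\Delta-\Delta')\lb 1-\tfrac{\Delta}{T}\rb$ for a nested pair $[s',t')\subset[s,t)$, and to $ab+bc+ca-\tfrac{1}{T}(a+b)(b+c)(a+c)$ for a properly overlapping pair with consecutive gaps $a,b,c$ (so $\Delta=a+b$, $\Delta'=b+c$, $\ell=b$). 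In each region one bounds $\det\Sigma$ from below by the product of the elementary factors appearing in these formulas — in the overlapping case using that the subtracted cubic term is dominated by $\halb(ab+bc+ca)$ once the gaps are small, while on the complement (all gaps bounded away from $0$) $\det\Sigma$ has a strictly positive lower bound by continuity — and the resulting model integrals, of the types $\int(ab+bc+ca)^{-1/2}\,da\,db\,dc$, $\int(\Delta'(\Delta-\Delta'))^{-1/2}\,d\Delta\,d\Delta'$ and $\int(1-\Delta/T)^{-1/2}\,d\Delta$ over bounded domains, are all convergent, whence $\mathcal{J}<\infty$.

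Granting $\mathcal{J}<\infty$, the theorem follows by dominated convergence. Since $\det\lb\Sigma+\mathrm{diag}(\ep,\ep')\rb=\det\Sigma+\ep'\sigma^2(s,t)+\ep\sigma^2(s',t')+\ep\ep'\ge\det\Sigma$, the integrand in the formula for $\E_{\mu}(\mathcal{I}_{\ep}^{BB}\mathcal{I}_{\ep'}^{BB})$ is bounded uniformly in $\ep,\ep'$ by the integrable $\mathcal{J}$-integrand and converges pointwise almost everywhere to $(2\pi\sqrt{\det\Sigma})^{-1}$ as $\ep,\ep'\downarrow 0$. Hence $\E_{\mu}(\mathcal{I}_{\ep}^{BB}\mathcal{I}_{\ep'}^{BB})\to C:=\tfrac{1}{2\pi}\mathcal{J}$; in particular $\E_{\mu}((\mathcal{I}_{\ep}^{BB})^2)\to C$, so that $\E_{\mu}\lb(\mathcal{I}_{\ep}^{BB}-\mathcal{I}_{\ep'}^{BB})^2\rb=\E_{\mu}((\mathcal{I}_{\ep}^{BB})^2)-2\,\E_{\mu}(\mathcal{I}_{\ep}^{BB}\mathcal{I}_{\ep'}^{BB})+\E_{\mu}((\mathcal{I}_{\ep'}^{BB})^2)\to 0$ as $\ep,\ep'\downarrow 0$. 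Thus $(\mathcal{I}_{\ep}^{BB})_{\ep>0}$ is Cauchy in $L^2(\mu)$ and converges to some $\mathcal{I}^{BB}\in L^2(\mu)$ with $\|\mathcal{I}^{BB}\|_{L^2(\mu)}^2=C$.

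The main obstacle is clearly the estimate $\mathcal{J}<\infty$, and within it the lower bounds on the Gram determinant near the set where the two increment intervals nearly coincide (or are nearly nested), where the $\ep=0$ integrand is genuinely unbounded but remains integrable thanks to the quadratic-in-the-gaps behaviour of $\det\Sigma$. This is exactly the point where dimension one is used: for $d\ge 2$ the analogous integrand carries the power $(\det\Sigma)^{-d/2}$ and fails to be locally integrable, which is why renormalization is needed there. The splitting-and-estimating scheme is modelled on the treatment of the self-intersection local time of fractional Brownian motion in \cite{HN05}.
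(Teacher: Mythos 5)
Your strategy coincides with the paper's own proof in all structural respects: you compute the mixed second moments $\E_{\mu}\lb \mathcal{I}_{\ep}^{BB}\mathcal{I}_{\ep'}^{BB}\rb$ via the Fourier representation of $p_{\ep}$ and the Gaussian characteristic function, pass to the limit by dominated convergence, conclude by the Cauchy criterion in $L^2(\mu)$, and reduce everything to the finiteness of the singular integral $\mathcal{J}=\int(\det\Sigma)^{-1/2}$, which you then attack by splitting into disjoint, nested and properly overlapping interval configurations. Your determinant formulas agree with the paper's expression $\tfrac{1}{T}\lb \Delta\Delta'(T-\Delta-\Delta'+2\ell)-T\ell^{2}\rb$ in each case. (A cosmetic point: for a bridge from $a$ to $b$ with $a\ne b$ the increments are not centered; the drift contributes a real factor of modulus at most one to each mixed moment, so the dominating bound and the limit argument are unaffected.)

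There is, however, a genuine hole in the one place you correctly identify as the crux, namely the overlapping configuration. Your dichotomy --- ``all gaps $a,b,c$ small,'' where you dominate the cubic term by $\halb(ab+bc+ca)$, versus ``all gaps bounded away from $0$,'' where you claim a uniform positive lower bound on $\det\Sigma$ by continuity --- does not cover the domain: the region where some gaps are small while another is large falls through, and there both of your bounds fail. Concretely, take $s_1=0$, $s_2=\ep$, $t_1=2\ep$, $t_2=T$, i.e. $a=b=\ep$, $c=T-2\ep$: then $ab+bc+ca\approx 2\ep T$ and $\tfrac{1}{T}(a+b)(b+c)(a+c)\approx 2\ep T$ as well, so the cubic term is \emph{not} dominated by $\halb(ab+bc+ca)$; and $\det\Sigma=\ep^{2}(1-2\ep/T)\to 0$, so no positive lower bound is available either. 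The integral is still finite there, and the clean way to see it is the identity $(a+b)(b+c)(a+c)=(a+b+c)(ab+bc+ca)-abc$, which yields $\det\Sigma=(ab+bc+ca)\tfrac{T-(a+b+c)}{T}+\tfrac{abc}{T}\ge ab\,\tfrac{w}{T}$ with $w=T-(a+b+c)$ the total slack --- and this $w$ is exactly the factor produced by integrating over the position of the union. So the gap is reparable, but as written the estimate $\mathcal{J}<\infty$ is not established. Note that the paper sidesteps all of this casework with a one-line device: for any quadratic $p$ with leading coefficient $-1$ that is nonnegative on an interval $I$ one has $\int_{I}p(x)^{-1/2}\,dx\le\pi$; since the determinant is such a quadratic in $s_1$ (resp. $s_2$), integrating that variable out first collapses each configuration to a manifestly convergent triple integral. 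Adopting either that lemma or the exact identity above closes your argument.
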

\begin{proof}
We observe that
\begin{align*}
\mathcal{I}_{\ep}^{BB} & = \int_0^T\int_0^tp_{\ep}(X_t-X_s)\, ds\, dt\\
 & = \frac{1}{2\pi}\int_0^T\int_0^t\int_\R \exp\lb i\xi(X_t-X_s)\rb \exp\lb -\frac{\ep}{2}|\xi|^2\rb \, d\xi \, ds \, dt.
\end{align*}
Let us denote $D:=\lcb (s_1,t_1,s_2,t_2):0<s_1<t_1<T \, , 0<s_2<t_2<T \rcb$.
Hence,
\begin{align*}
\E\lb (\mathcal{I}_{\ep}^{BB})^2\rb&=\E\lb \frac{1}{4\pi^2}\int_D \int_{\R^2} \exp\lb i \sum_{j=1}^2\xi_j(X_{t_j}-X_{s_j})\rb \exp\lb -\frac{\ep}{2}\sum_{j=1}^2\xi_j^2\rb d\xi \, ds \, dt\rb\\
&=\frac{1}{4\pi^2} \int_D \int_{\R^2} \E\lb \exp\lb i \sum_{j=1}^2\xi_j(X_{t_j}-X_{s_j})\rb \rb \exp\lb -\frac{\ep}{2}\sum_{j=1}^2\xi_j^2\rb d\xi \, ds \, dt\\
&=\frac{1}{4\pi^2}\int_D \int_{\R^2} \exp\lb i\E\lb \sum_{j=1}^2\xi_j(X_{t_j}-X_{s_j})\rb -\frac{1}{2}\var\lb \sum_{j=1}^2\xi_j(X_{t_j}-X_{s_j})\rb \rb \\
&\qquad \times \exp\lb -\frac{\ep}{2}\sum_{j=1}^2\xi_j^2\rb d\xi \, ds \, dt ,
\end{align*}
where we use that $X_{t_j}-X_{s_j}$, $j=1,2$, are Gaussian random variables. Note that by Lebesgue's dominated convergence theorem $\E\lb (\mathcal{I}_{\ep}^{BB})^2\rb$ converges to
\[ \beta_2:=\frac{1}{4\pi^2}\int_D \int_{\R^2} \exp\lb i\E\lb \sum_{j=1}^2\xi_j(X_{t_j}-X_{s_j})\rb -\frac{1}{2}\var\lb \sum_{j=1}^2\xi_j(X_{t_j}-X_{s_j})\rb \rb  d\xi \, ds \, dt\]
as $\ep$ tends to zero, provided 
\[ \al_2:=\frac{1}{4\pi^2}\int_D \int_{\R^2} \exp\lb -\frac{1}{2}\var \lb \sum_{j=1}^2\xi_j(X_{t_j}-X_{s_j})\rb \rb  d\xi \, ds \, dt <\infty .\]
We also consider
\begin{align*} 
\E\lb \mathcal{I}_{\ep}^{BB} \mathcal{I}_{\de}^{BB} \rb&=\frac{1}{4\pi^2}\int_D \int_{\R^2} \exp\lb i\E\lb \sum_{j=1}^2\xi_j(X_{t_j}-X_{s_j})\rb -\frac{1}{2}\var\lb \sum_{j=1}^2\xi_j(X_{t_j}-X_{s_j})\rb \rb \\
&\qquad \times \exp\lb -\frac{\ep}{2}\xi_1^2 -\frac{\de}{2}\xi_2^2\rb d\xi \, ds \, dt.
\end{align*}
If $\al_2<\infty$, then we also have that
\[ \lim_{(\ep ,\de) \to (0,0)} \E\lb \mathcal{I}_{\ep}^{BB} \mathcal{I}_{\de}^{BB} \rb=\beta_2. \]
Moreover, this implies that $\lb \mathcal{I}_{\ep}^{BB}\rb_{\ep >0}$ converges in $L^2(\mu)$ as $\ep$ tends to zero. Indeed, we show that $\lb \mathcal{I}_{\ep}^{BB}\rb_{\ep >0}$ is a Cauchy sequence in $L^2(\mu)$: Let $\gamma >0$, then there exists $N\in \N_0$ such that for all $0<\ep ,\de \le \frac{1}{N}$
\[
\E\lb (\mathcal{I}_{\ep}^{BB} -\mathcal{I}_{\de}^{BB})^2\rb=\E\lb(\mathcal{I}_{\ep}^{BB})^2 \rb +\E\lb (\mathcal{I}_{\de}^{BB})^2\rb-2\E\lb \mathcal{I}_{\ep}^{BB} \mathcal{I}_{\de}^{BB} \rb<\gamma .
\]
Therefore, for symmetry reason it is sufficient to show that
\[ \gam_2 :=\int_{D'} \int_{\R^2}\exp\lb -\frac{1}{2}\var \lb \sum_{j=1}^2\xi_j(X_{t_j}-X_{s_j})\rb \rb  d\xi \, ds \, dt  \]
is finite, where $D':=D\cap \{ t_1<t_2 \}$. Furthermore we decompose $D'$ into three disjoint sets, i.e. $D'=D_1\sqcup D_2 \sqcup D_3$ where
\begin{align*}
D_1&:=\{(s_1,t_1,s_2,t_2):0<s_1<t_1<s_2<t_2<T\},\\
D_2&:=\{(s_1,t_1,s_2,t_2):0<s_1<s_2<t_1<t_2<T\},\\
D_3&:=\{(s_1,t_1,s_2,t_2):0<s_2<s_1<t_1<t_2<T\}.
\end{align*}
We show that
\[ \gam_2^l:=\int_{D_l}\int_{\R^2}\exp\lb-\frac{1}{2}\var\lb \sum_{j=1}^2\xi_j(X_{t_j}-X_{s_j})\rb \rb\,d\xi\,ds \, dt \]
is finite for $l=1,2,3$. Computing the Gaussian integral we get that
\[ \gam_2^l=2\pi\sqrt{T}\int_{D_l}\lb(t_1-s_1)(t_2-s_2)\lb T-(t_1-s_1)-(t_2-s_2)+2m_l\rb-Tm_l^2\rb^{-1/2}\, ds \, dt \]
where $m_l=m_l(s_1,t_1,s_2,t_2)=d x\lb[s_1,t_1]\cap[s_2,t_2]\rb$ is the length of the intersection of $[s_1,t_1]$ and $[s_2,t_2]$, i.e. $m_1=0$, $m_2=t_1-s_2$, and $m_3=t_1-s_1$. To get an estimate on $\gam_2^l$ in each case $l=1,2,3$ we use the following fact: Let $p\colon\R\to\R$ be an arbitrary polynomial of degree $2$ with leading coefficient $-1$ and let $I\subset\R$ be an interval such that $p(x)\ge 0$ for $x\in I$. Then
\[ \int_Ip(x)^{-1/2}\,dx\le\pi. \]
This follows from $\int_{-1}^1(1-x^2)^{-1/2}\,dx=\pi$. Let us first consider $l=1$. Note that for all $t_1,s_2,t_2$ we have
\[ \int_0^{t_1}\lb(t_1-s_1)(t_2-s_2)\lb T-(t_1-s_1)-(t_2-s_2)\rb\rb^{-1/2}\,ds_1\le\pi(t_2-s_2)^{-1/2}. \]
Hence
\[ \gam_2^1 \le 2\pi\sqrt{T}\int_0^T\int_0^{t_2}\int_0^{s_2}\pi(t_2-s_2)^{-1/2}\,dt_1\,ds_2\,dt_2<\infty .\]
Now we proceed for $l=2$. For all $s_2,t_1,t_2$ it holds
\begin{align*}
&\int_0^{s_2}\lb(t_1-s_1)(t_2-s_2)\lb T-(t_1-s_1)-(t_2-s_2)+2(t_1-s_2)\rb-T(t_1-s_2)^2\rb^{-1/2}\,ds_1\\
&\le \pi(t_2-s_2)^{-1/2}
\end{align*}
and thus
\[ \gam_2^2\le 2\pi\sqrt{T}\int_0^T\int_0^{t_2}\int_0^{t_1}\pi(t_2-s_2)^{-1/2}\,ds_2\,dt_1\,dt_2<\infty. \]
Finally we check for $l=3$. For all $s_1,t_1,t_2$ it holds
\begin{align*}
&\int_0^{s_1}\lb(t_1-s_1)(t_2-s_2)\lb T-(t_1-s_1)-(t_2-s_2)+2(t_1-s_1)\rb-T(t_1-s_1)^2\rb^{-1/2}\,ds_2\\
&\le \pi(t_1-s_1)^{-1/2}
\end{align*}
which yields
\[ \gam_2^3\le 2\pi\sqrt{T}\int_0^T\int_0^{t_2}\int_0^{t_1}\pi(t_1-s_1)^{-1/2}\,ds_1\,dt_1\,dt_2<\infty. \]
As a conclusion, we have $\gam_2=\gam_2^1+\gam_2^2+\gam_2^3<\infty$ and the proof is finished.
\end{proof}

\begin{cor}\label{expSILT1DBB}
For $z\in \C$ with $\operatorname{Re}z\le 0$ and $1\le p<\infty$ it holds that $\exp\lb z \mathcal{I}_{\ep}^{BB} \rb$ converges to $\exp\lb z\mathcal{I}^{BB} \rb$ in $L^p(\mu)$ as $\ep \to 0$.
\end{cor}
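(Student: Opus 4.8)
The plan is to derive the $L^p$-convergence from the $L^2$-convergence of Theorem~\ref{SILT1DBB} together with uniform integrability, exploiting that $\operatorname{Re}z \le 0$ forces the exponentials to be bounded in modulus by $1$. First I would note that since $\re z \le 0$ and $\mathcal{I}_{\ep}^{BB} \ge 0$ (it is an integral of the nonnegative heat kernels $p_{\ep}$), we have $\lvb \exp(z\mathcal{I}_{\ep}^{BB})\rvb = \exp((\re z)\mathcal{I}_{\ep}^{BB}) \le 1$ pointwise, and likewise $\lvb \exp(z\mathcal{I}^{BB})\rvb \le 1$ since $\mathcal{I}^{BB} \ge 0$ (as an $L^2(\mu)$-limit of nonnegative functions). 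Thus both sides lie in $L^p(\mu)$ for every $1 \le p < \infty$, and it remains to prove convergence.

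Next I would pass to a subsequence: by Theorem~\ref{SILT1DBB}, $\mathcal{I}_{\ep}^{BB} \to \mathcal{I}^{BB}$ in $L^2(\mu)$, so along a suitable sequence $\ep_k \downarrow 0$ we have $\mathcal{I}_{\ep_k}^{BB} \to \mathcal{I}^{BB}$ $\mu$-almost everywhere. Since $z \mapsto \exp(zx)$ is continuous, this yields $\exp(z\mathcal{I}_{\ep_k}^{BB}) \to \exp(z\mathcal{I}^{BB})$ $\mu$-a.e. Combining this with the uniform bound $\lvb \exp(z\mathcal{I}_{\ep_k}^{BB}) - \exp(z\mathcal{I}^{BB})\rvb \le 2$ and the fact that the constant $2$ is in $L^p(\mu)$ (a probability measure), Lebesgue's dominated convergence theorem gives $\exp(z\mathcal{I}_{\ep_k}^{BB}) \to \exp(z\mathcal{I}^{BB})$ in $L^p(\mu)$ along the subsequence.

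Finally I would upgrade from subsequential to full convergence by the standard subsequence principle: the argument above applies to \emph{any} subsequence of $(\ep)_{\ep > 0}$, showing that every subsequence of $\exp(z\mathcal{I}_{\ep}^{BB})$ has a further subsequence converging in $L^p(\mu)$ to the same limit $\exp(z\mathcal{I}^{BB})$; hence the whole net converges in $L^p(\mu)$. Alternatively, one could avoid subsequences entirely by invoking a Vitali-type convergence theorem: convergence in $L^2(\mu)$ implies convergence in $\mu$-measure, and the uniform bound by $1$ gives uniform integrability of the $p$-th powers, so $L^p$-convergence follows directly. I do not expect any genuine obstacle here; the only point requiring a small amount of care is the a.e.\ (or in-measure) convergence of $\mathcal{I}_{\ep}^{BB}$, which is not literally stated in Theorem~\ref{SILT1DBB} but follows immediately from the $L^2$-convergence established there, and the observation that positivity of $\mathcal{I}^{BB}$ (needed so that the limit also has modulus $\le 1$) is inherited from positivity of the approximants.
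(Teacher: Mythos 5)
Your proposal is correct and follows essentially the same route as the paper: the paper likewise deduces convergence in probability from the $L^2$-convergence of Theorem \ref{SILT1DBB}, uses continuity of $x\mapsto\exp(zx)$ and the uniform bound $\lvb\exp(z\mathcal{I}_{\ep}^{BB})\rvb\le 1$, and concludes via a dominated convergence theorem for convergence in probability. Your subsequence/Vitali formulation is just the standard proof of that version of dominated convergence, and your explicit remark that $\mathcal{I}^{BB}\ge 0$ $\mu$-a.e.\ (inherited from the approximants) is a small point the paper leaves implicit.
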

\begin{proof}
Since $\mathcal{I}_{\ep}^{BB}$ converges to $\mathcal{I}^{BB}$ in $L^2(\mu)$, then $\mathcal{I}_{\ep}^{BB}$ also converges to $\mathcal{I}^{BB}$ in probability. Using the continuity of $x\mapsto \exp\lb zx\rb$, we have $\exp\lb z\mathcal{I}_{\ep}^{BB}\rb$ converges in probability to $\exp\lb z\mathcal{I}^{BB}\rb$. It is clear that $\lvb \exp\lb z\mathcal{I}_{\ep}^{BB} \rb \rvb \le 1$ for all $\ep >0$, since $\mathcal{I}_{\ep}^{BB} >0$ for all $\ep >0$. Therefore, by using a dominated convergence theorem (see e.g. \cite[Theorem 17.4]{JP04}) we can conclude that $\exp\lb z\mathcal{I}_{\ep}^{BB}\rb$ converges to $\exp\lb z\mathcal{I}^{BB}\rb$ in $L^p(\mu)$ as $\ep \to 0$ for all $1\le p<\infty$.
\end{proof}

By using the integral decomposition method as above we can establish a proof for the $L^2(\mu)$-approximation of self-intersection local time of a one-dimensional Brownian motion. The proof is almost identical to that of Theorem \ref{SILT1DBB} and even simpler due to the independence of increments of Brownian motion. Hence, we state the results without details and proofs.
\begin{thm}\label{SILT1DBM}
The approximated self-intersection local time
\[ \mathcal{I}_{\ep}^{BM}:=\int_0^T\int_0^tp_{\ep}(B_t-B_s)\, ds\, dt, \quad \ep >0, \]
of a one-dimensional Brownian motion $B$ converges in $L^2(\mu)$ as $\ep$ tends to zero, i.e.
\[ \lim_{\ep \downarrow 0}\mathcal{I}_{\ep}^{BM} =:\mathcal{I}^{BM}\in L^2(\mu). \] 
\end{thm}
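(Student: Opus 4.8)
The plan is to run the argument of the proof of Theorem~\ref{SILT1DBB} almost verbatim, replacing the Brownian bridge $X$ by the standard Brownian motion $B$. Writing the heat kernel as $p_{\ep}(x)=\frac{1}{2\pi}\int_{\R}\exp(i\xi x)\exp(-\frac{\ep}{2}\xi^2)\,d\xi$ gives
\[ \mathcal{I}_{\ep}^{BM}=\frac{1}{2\pi}\int_0^T\int_0^t\int_{\R}\exp\lb i\xi(B_t-B_s)\rb\exp\lb-\frac{\ep}{2}\xi^2\rb\,d\xi\,ds\,dt . \]
Squaring, applying Fubini, and using that $(B_{t_1}-B_{s_1},B_{t_2}-B_{s_2})$ is a centered Gaussian vector, one obtains, with $D:=\lcb(s_1,t_1,s_2,t_2):0<s_1<t_1<T,\ 0<s_2<t_2<T\rcb$,
\[ \E\lb(\mathcal{I}_{\ep}^{BM})^2\rb=\frac{1}{4\pi^2}\int_D\int_{\R^2}\exp\lb-\halb\var\lb\sum_{j=1}^2\xi_j(B_{t_j}-B_{s_j})\rb\rb\exp\lb-\frac{\ep}{2}\sum_{j=1}^2\xi_j^2\rb d\xi\,ds\,dt , \]
and the analogous expression for $\E\lb\mathcal{I}_{\ep}^{BM}\mathcal{I}_{\de}^{BM}\rb$ with $\exp(-\frac{\ep}{2}\xi_1^2-\frac{\de}{2}\xi_2^2)$ in place of the last factor. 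Compared to the bridge, the mean term $\E\lb\sum_j\xi_j(B_{t_j}-B_{s_j})\rb$ vanishes identically, which already lightens the bookkeeping.

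By dominated convergence both quantities converge, as $\ep,\de\downarrow0$, to the common value
\[ \beta_2:=\frac{1}{4\pi^2}\int_D\int_{\R^2}\exp\lb-\halb\var\lb\sum_{j=1}^2\xi_j(B_{t_j}-B_{s_j})\rb\rb d\xi\,ds\,dt , \]
provided the master integral $\al_2$, given by the same formula, is finite (the exponential factors are bounded by $1$, so the $\ep=\de=0$ integrand is a dominating function, integrable precisely when $\al_2<\infty$). Granting $\al_2<\infty$, the identity $\E\lb(\mathcal{I}_{\ep}^{BM}-\mathcal{I}_{\de}^{BM})^2\rb=\E\lb(\mathcal{I}_{\ep}^{BM})^2\rb+\E\lb(\mathcal{I}_{\de}^{BM})^2\rb-2\E\lb\mathcal{I}_{\ep}^{BM}\mathcal{I}_{\de}^{BM}\rb$ tends to $\beta_2+\beta_2-2\beta_2=0$, so $(\mathcal{I}_{\ep}^{BM})_{\ep>0}$ is Cauchy in $L^2(\mu)$ and converges to some $\mathcal{I}^{BM}\in L^2(\mu)$.

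The only real work is the finiteness of $\al_2$. For $s_j<t_j$ one has $\mathrm{cov}(B_{t_1}-B_{s_1},B_{t_2}-B_{s_2})=dx\lb[s_1,t_1]\cap[s_2,t_2]\rb=:m$, hence
\[ \var\lb\sum_{j=1}^2\xi_j(B_{t_j}-B_{s_j})\rb=(t_1-s_1)\xi_1^2+(t_2-s_2)\xi_2^2+2m\,\xi_1\xi_2 , \]
so the inner Gaussian integral equals $2\pi\lb(t_1-s_1)(t_2-s_2)-m^2\rb^{-1/2}$, the determinant being positive for a.e.\ $(s,t)\in D$ by Cauchy--Schwarz. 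Thus $\al_2=\frac{1}{2\pi}\int_D\lb(t_1-s_1)(t_2-s_2)-m^2\rb^{-1/2}ds\,dt$, and by the symmetry $(s_1,t_1)\leftrightarrow(s_2,t_2)$ it is enough to bound the integral over $D'=D\cap\lcb t_1<t_2\rcb=D_1\sqcup D_2\sqcup D_3$, with $D_1,D_2,D_3$ the three sets from the proof of Theorem~\ref{SILT1DBB}, on which $m=0$, $m=t_1-s_2$ and $m=t_1-s_1$ respectively. On $D_1$ the increments are independent, $(t_1-s_1)(t_2-s_2)-m^2=(t_1-s_1)(t_2-s_2)$ factorizes, $\int_0^{t_1}(t_1-s_1)^{-1/2}ds_1=2\sqrt{t_1}$, and the remaining integrations over the bounded set $[0,T]^3$ are elementary. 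On $D_2$ and $D_3$ the quantity $(t_1-s_1)(t_2-s_2)-m^2$ is an \emph{affine} function of $s_1$, resp.\ $s_2$, which is nonnegative on the range of integration; the inner integral of its inverse square root is then computed explicitly as a difference of square roots and bounded by a constant depending only on $T$ times $(t_2-s_2)^{-1/2}$, resp.\ $(t_1-s_1)^{-1/2}$, after which the three remaining integrations over $[0,T]^3$ are again finite. Hence $\al_2<\infty$ and the proof is finished. The main obstacle is this last estimate, but it is strictly simpler than its bridge counterpart: for Brownian motion the relevant expression is affine rather than quadratic in the variable integrated out, and on $D_1$ independence removes the cross term altogether, so the degree-$2$ polynomial bound $\int_I p(x)^{-1/2}dx\le\pi$ used for the bridge is not needed here.
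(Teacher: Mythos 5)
Your proposal is correct and follows exactly the route the paper indicates: the paper omits the proof of Theorem~\ref{SILT1DBM}, stating only that it is ``almost identical to that of Theorem~\ref{SILT1DBB} and even simpler due to the independence of increments,'' and your adaptation (vanishing mean term, covariance $m=dx([s_1,t_1]\cap[s_2,t_2])$, determinant $(t_1-s_1)(t_2-s_2)-m^2$, and the affine-in-$s_1$ resp.\ $s_2$ inner integrals on $D_2$, $D_3$) is precisely that simplification, with all estimates checking out. No gaps.
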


\begin{cor}\label{expSILT1DBM}
For $z\in \C$ with $\operatorname{Re}z\le 0$ and $1\le p<\infty$ it holds that $\exp\lb z \mathcal{I}_{\ep}^{BM} \rb$ converges to $\exp\lb z\mathcal{I}^{BM}\rb$ in $L^p(\mu)$ as $\ep \to 0$.
\end{cor}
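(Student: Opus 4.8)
The plan is to reproduce verbatim the argument of Corollary~\ref{expSILT1DBB}, since the only properties of the Brownian bridge that proof exploited are the $L^2(\mu)$-convergence of the approximated self-intersection local times and their non-negativity, and both are now available here: the convergence $\mathcal{I}_{\ep}^{BM}\to\mathcal{I}^{BM}$ in $L^2(\mu)$ is Theorem~\ref{SILT1DBM}, while $\mathcal{I}_{\ep}^{BM}=\int_0^T\int_0^t p_{\ep}(B_t-B_s)\,ds\,dt\ge 0$ is immediate from $p_{\ep}\ge 0$.

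First I would pass from the $L^2(\mu)$-convergence in Theorem~\ref{SILT1DBM} to convergence in $\mu$-probability, and then compose with the continuous map $x\mapsto\exp(zx)$ to obtain that $\exp(z\mathcal{I}_{\ep}^{BM})$ converges to $\exp(z\mathcal{I}^{BM})$ in probability as $\ep\to 0$. Second, using $\mathcal{I}_{\ep}^{BM}\ge 0$ together with $\operatorname{Re}z\le 0$, I would record the uniform bound $|\exp(z\mathcal{I}_{\ep}^{BM})|=\exp(\operatorname{Re}z\cdot\mathcal{I}_{\ep}^{BM})\le 1$, valid for every $\ep>0$. Third, since $\mu$ is a probability measure the constant function $1$ lies in $L^p(\mu)$ for every $1\le p<\infty$, so the version of the dominated convergence theorem for convergence in probability (\cite[Theorem 17.4]{JP04}) applies with dominating function $1$ and yields $\exp(z\mathcal{I}_{\ep}^{BM})\to\exp(z\mathcal{I}^{BM})$ in $L^p(\mu)$.

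I do not expect any genuine obstacle at the level of the corollary; the substantive work sits in Theorem~\ref{SILT1DBM}, whose proof — as indicated in the paragraph preceding it — copies the Fourier/variance representation used in the proof of Theorem~\ref{SILT1DBB}, the simplification being that over disjoint time intervals the increments of Brownian motion are independent, so the variance $\var(\sum_{j=1}^2\xi_j(B_{t_j}-B_{s_j}))$ decouples on the analogues of $D_1$, $D_2$, $D_3$ and the resulting Gaussian integrals are bounded even more easily by the same elementary inequality $\int_I p(x)^{-1/2}\,dx\le\pi$ for a downward parabola $p$ non-negative on the interval $I$. The only point worth a word is that the limit $\exp(z\mathcal{I}^{BM})$ does not depend on the chosen approximating family, which again follows from uniqueness of limits in probability.
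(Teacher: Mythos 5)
Your proposal is correct and is essentially the paper's own argument: the paper proves Corollary~\ref{expSILT1DBM} by exactly the chain you describe (convergence in $L^2(\mu)$ from Theorem~\ref{SILT1DBM} implies convergence in probability, continuity of $x\mapsto\exp(zx)$, the uniform bound $\lvb\exp(z\mathcal{I}_{\ep}^{BM})\rvb\le 1$ from $\mathcal{I}_{\ep}^{BM}\ge 0$ and $\operatorname{Re}z\le 0$, and the dominated convergence theorem for convergence in probability), which is precisely the proof given for the Brownian bridge version in Corollary~\ref{expSILT1DBB} and which the paper explicitly declines to repeat.
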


\noindent The limit object $\mathcal{I}^{BM}$ in Theorem \ref{SILT1DBM} is called a one-dimensional self-intersection local time of Brownian motion, and is usually denoted by
\begin{equation}\label{SILT1}
\int_0^T\int_0^t\de(B_t-B_s)\, ds\, dt,
\end{equation}
where $\de$ is the Dirac-delta distribution at $0$. De Faria et al in \cite{FHSW97} proved that for any spatial dimension of the Brownian motion $B$ self-intersection local time $\int_0^T\int_0^t\de(B_t-B_s)\, ds\, dt$, after suitably renormalized, exists as a Hida distribution.

\section{Feynman Integrand for Electrons in Random Media}

Recall from the introduction that from the Gaussian scattering potential for electrons in random media we can obtain informally the corresponding Feynman integrand (with Dirac delta correlation function) without kinetic energy part as
\begin{equation}\label{integrand1}
\exp\lb \int_0^T\lb -\frac{ik}{2\hbar} \int_0^T \de\lb x(t)-x(s)\rb ds\,\rb dt \rb \cdot \de_{x_T}\lb x(T)\rb,
\end{equation}
see (\ref{motiv}). The Donsker's delta function here is used to pin the endpoint of the paths. Now we set $g:=\frac{k}{2\hbar}>0$. To get the Feynman integrand with kinetic energy we follow the complex-scaling ansatz proposed by Cameron \cite{Cam60} and Doss \cite{Doss82}, i.e., we multiply all Brownian motion by $\sqrt{i}$ and obtain the informal product
\begin{equation}\label{integrand2}
\exp\lb \frac{1}{i} \int_0^T \lb -ig \int_0^T \de\lb \sqrt{i}(B_t-B_s)\rb ds\rb dt \rb \cdot \de_{x_T}\lb x_0+ \sqrt{i} B_T\rb,
\end{equation}
where $(B_t)_{t\in [0,T]}$ is a standard Brownian motion. We call the expression (\ref{integrand2}) Feynman-Kac-Cameron-Doss integrand. Recall two sequences approximating Dirac delta distribution:
\[ p_{\ep}(x)=\frac{1}{\sqrt{2\pi \ep}}\exp\lb -\frac{x^2}{2\ep}\rb \quad \mbox{ and } \quad q_{\ep}(x)=\frac{1}{\sqrt{2\pi i\ep}}\exp\lb -\frac{x^2}{2i\ep}\rb ,\quad \ep >0,\]
i.e., the heat kernel and the free Schr\"odinger kernel, respectively. It is easy to see that $q_{\ep}(\sqrt{i}x)=\frac{1}{\sqrt{i}}p_{\ep}(x)$. 
Now we define the first factor in the product above in the following sense:
\begin{align*}
&\exp\lb \frac{1}{i} \int_0^T \lb -ig \int_0^T \de\lb \sqrt{i}(B_t-B_s)\rb ds\rb dt \rb\\
&:=\lim_{\ep \to 0}\exp\lb \frac{1}{i} \int_0^T \lb -ig \int_0^T q_{\ep}\lb \sqrt{i}(B_t-B_s)\rb ds\rb dt \rb\\
&=\lim_{\ep \to 0}\exp\lb-g  \int_0^T \int_0^T \frac{1}{\sqrt{i}}p_{\ep}\lb B_t-B_s\rb ds\, dt \rb\\
&=\lim_{\ep \to 0}\exp\lb-g i^{-1/2}  \int_0^T \int_0^T p_{\ep}\lb B_t-B_s\rb ds\, dt \rb\\
&=\exp\lb-g i^{-1/2}  \int_0^T \int_0^T \de\lb B_t-B_s\rb ds\, dt \rb \in L^2(\mu),
\end{align*}
by Corollary \ref{expSILT1DBM}. We always consider $i^{-1/2}$ with $\mathrm{Re}(i^{-1/2})\ge 0$. On the other hand we know that
\[ \de_{x_T}\lb x_0+ \sqrt{i} B_T\rb \in \cG' , \]
see e.g. \cite{GSV11}. Therefore we arrive at the problem of multiplication of a square-integrable function with a regular distribution. In the following we are able to give a rigorous meaning to this product as a limit object in $\cG'$. To this end we apply  the refinement of the Wick formula, i.e. Theorem \ref{refWick}. To proceed further we restrict ourselves in the special case $x_0=x_T$. This case is of particular interest from the physical application point of view. For example in the investigation of the density of states of electrons in random media. The density of states is obtained by taking Fourier transform with respect to time of the trace (diagonal element) of the electron's propagator. For more information we refer to \cite{EG64, Sam74, KL86}.

Let us fix the following notations
\[ \Phi:=\exp\lb-g i^{-1/2}  \int_0^T \int_0^T \de\lb B_t-B_s\rb ds\, dt \rb \in L^2(\mu) \]
and
\[ \Psi:=\exp\lb-g i^{-1/2}  \int_0^T \int_0^T \de\lb X_t-X_s\rb ds\, dt \rb \in L^2(\mu) ,\]
where $X_t=x_0+B_t-\frac{t}{T}B_T$, see Example \ref{BMBB}. Moreover, for $\ep >0$ and $n\in \N$ we define
\begin{align*}
\Phi_{\ep}&:=\exp\lb -gi^{-1/2} \int_0^T \int_0^T p_{\ep}\lb B_t-B_s\rb ds\, dt \rb ,\\
\Phi_{\ep,n}&:=\exp\lb -gi^{-1/2} \lb \frac{T}{n}\rb^2 \sum_{k,l=1}^n p_{\ep}\lb B_{t_k}-B_{s_l}\rb \rb ,\\
\Psi_{\ep}&:=\exp\lb -gi^{-1/2} \int_0^T \int_0^T p_{\ep}\lb X_t-X_s\rb ds\, dt \rb , \quad \mbox{ and }\\
\Psi_{\ep,n}&:=\exp\lb -gi^{-1/2} \lb \frac{T}{n}\rb^2 \sum_{k,l=1}^n p_{\ep}\lb X_{t_k}-X_{s_l}\rb \rb ,
\end{align*}
where $\lcb t_1,t_2,\ldots,t_n \rcb$ and $\lcb s_1,s_2,\ldots,s_n \rcb$ are two partitions of the interval $[0,T]$. Note that $\Phi_{\ep,n}$ and $\Psi_{\ep,n}$ are continuous square-integrable as functions of  $B_{t_k}-B_{s_l}$, $k,l=1,\ldots,n$ and  $X_{t_k}-X_{s_l}$, $k,l=1,\ldots,n$, respectively. I.e. $\Phi_{\ep,n}$ and $\Psi_{\ep,n}$ depend on Brownian motion and Brownian bridge at $n^2$ time points, respectively. Recall that $(B_t)_{t\in [0,T]}$ and $(X_t)_{t\in [0,T]}$ have continuous paths. Hence, for $n\to \infty$ the functions $\Phi_{\ep,n}$ and $\Psi_{\ep,n}$ converge $\mu$-a.s. to $\Phi_{\ep}$ and $\Psi_{\ep}$, respectively (approximation of the Riemann integral by a Riemann sum). Thus, we also have convergences in $L^2(\mu)$ by Lebesgue's dominated convergence theorem by using the uniform upper bound equals one. Let us fix $\ep >0$ and $n\in \N$. Denote also $\eta:=\frac{\1_{[0,T)}}{\sqrt{T}}$. Since Donsker's delta function is homogeneous of degree $-1$ and by using Theorem \ref{refWick} and Example \ref{BMBB} we have
\[
\Phi_{\ep,n} \cdot \de_{x_T}\lb x_0+\sqrt{i} B_T\rb
=\Psi_{\ep,n} \wick \frac{1}{\sqrt{i}}\de_{\frac{x_T-x_0}{\sqrt{i}}}\lb B_T\rb \in \cG'.
\]
Now using the $L^2(\mu)$-convergence of $\Phi_{\ep,n}$ and $\Psi_{\ep,n}$ to $\Phi_{\ep}$ and $\Psi_{\ep}$, respectively, as $n\to \infty$, and using the continuity of Wick product from $L^2(\mu)\times \cG'$ to $\cG'$ we can further define
\begin{align*}
\Phi_{\ep} \cdot \de_{x_T}\lb x_0+\sqrt{i} B_T\rb &:=\lim_{n\to \infty} \lb \Phi_{\ep,n} \cdot \de_{x_T}\lb x_0+\sqrt{i} B_T\rb \rb
=\lim_{n\to \infty}\lb \Psi_{\ep,n} \, \wick \, \frac{1}{\sqrt{i}}\de_{\frac{x_T-x_0}{\sqrt{i}}}\lb B_T\rb \rb\\
&=\lb \lim_{n\to \infty}\Psi_{\ep,n} \rb \wick \, \frac{1}{\sqrt{i}}\de_{\frac{x_T-x_0}{\sqrt{i}}}\lb B_T\rb
=\Psi_{\ep} \, \wick \, \frac{1}{\sqrt{i}}\de_{\frac{x_T-x_0}{\sqrt{i}}}\lb B_T\rb \in \cG' .
\end{align*}
As final step, using Corollary \ref{expSILT1DBM}, Corollary \ref{expSILT1DBB}, and continuity of Wick product from $L^2(\mu)\times \cG'$ to $\cG'$ we can make the following definition
\begin{align*}
\Phi \cdot \de_{x_T}\lb x_0+\sqrt{i} B_T\rb
&:=\lim_{\ep \to 0} \lb \Phi_{\ep} \cdot \de_{x_T}\lb x_0+ \sqrt{i} B_T\rb \rb
=\lim_{\ep \to 0} \lb \Psi_{\ep} \wick \frac{1}{\sqrt{i}}\de_{\frac{x_T-x_0}{\sqrt{i}}}\lb B_T\rb \rb\\
&= \lb \lim_{\ep \to 0} \Psi_{\ep}  \rb \wick \frac{1}{\sqrt{i}}\de_{\frac{x_T-x_0}{\sqrt{i}}}\lb B_T\rb
=\Psi \wick \frac{1}{\sqrt{i}}\de_{\frac{x_T-x_0}{\sqrt{i}}}\lb B_T\rb.
\end{align*}
Since $\Psi \in L^2(\mu)\subset \cG'$ and $\de_{x_T}\lb x_0+\sqrt{i} B_T\rb=\frac{1}{\sqrt{i}}\de_{\frac{x_T-x_0}{\sqrt{i}}}\lb B_T\rb \in \cG'$, we have given a meaning to the product (\ref{Doss}) as an element of $ \cG'$. We summarize our main result in the following theorem.
\begin{thm}
The Feynman-Kac-Cameron-Doss integrand of the electrons in random media with non-local Dirac delta action
\[ \exp\lb \frac{1}{i} \int_0^T \lb -\frac{ik}{2\hbar} \int_0^T \de\lb \sqrt{i}(B_t-B_s)\rb ds\rb dt \rb \cdot \de_{x_T}\lb x_0+\sqrt{i}B_T\rb, \]
where $(B_t)_{t\in [0,T]}$ is a one-dimensional standard Brownian motion and $x_0=x_T\in \R$, is a regular distribution of white noise, i.e. an element of $\cG'$. Furthermore, it holds that
\begin{align*}
&\exp\lb \frac{1}{i} \int_0^T \lb -\frac{ik}{2\hbar} \int_0^T \de\lb \sqrt{i}(B_t-B_s)\rb ds\rb dt \rb \cdot \de_{x_T}\lb x_0+\sqrt{i}B_T\rb\\
&=\exp\lb-\frac{k}{2\hbar} i^{-1/2}  \int_0^T \int_0^T \de\lb X_t-X_s\rb ds\, dt \rb \wick \frac{1}{\sqrt{i}}\de_{\frac{x_T-x_0}{\sqrt{i}}}\lb B_T\rb,
\end{align*}
$(X_t)_{t\in [0,T]}$ is a one-dimensional Brownian bridge given by $X_t=x_0+B_t-\frac{t}{T}B_T$.

\end{thm}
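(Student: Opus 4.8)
The plan is to realize the informal product as an iterated limit in $\cG'$, treating the two factors by completely different means: the exponential factor will be controlled in $L^2(\mu)$ through self-intersection local times, while the pinning Donsker delta will be kept as a regular distribution, and the two will be combined via the refined Wick formula of Theorem~\ref{refWick}.

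First I would give a meaning to the first factor. Using the identity $q_\ep(\sqrt i\,x)=\frac{1}{\sqrt i}p_\ep(x)$, the complex scaling converts the free Schr\"odinger kernel into the heat kernel, so the regularized exponential becomes $\exp(-g\,i^{-1/2}\int_0^T\int_0^T p_\ep(B_t-B_s)\,ds\,dt)$ with $g=\frac{k}{2\hbar}>0$. Choosing the branch with $\operatorname{Re}(i^{-1/2})\ge0$ gives $\operatorname{Re}(-g\,i^{-1/2})\le0$, so Theorem~\ref{SILT1DBM} and Corollary~\ref{expSILT1DBM} yield convergence in $L^2(\mu)$ as $\ep\to0$ to $\Phi=\exp(-g\,i^{-1/2}\mathcal{I}^{BM})\in L^2(\mu)$. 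Simultaneously, homogeneity of degree $-1$ of Donsker's delta rewrites the pinning factor as $\de_{x_T}(x_0+\sqrt i B_T)=\frac{1}{\sqrt i}\de_{(x_T-x_0)/\sqrt i}(B_T)\in\cG'$.

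The main obstacle is that one then wants to multiply a mere $L^2(\mu)$-function by a regular distribution, and such a product is a priori meaningless --- only $\cG\cdot\cG'$ is defined, and checking $\Phi\in\cG$ is not feasible. This is exactly what the refinement Theorem~\ref{refWick} is designed for, but that result applies only to functions depending on finitely many Gaussian coordinates. Hence the key device is a discretization: approximate $\int_0^T\int_0^T p_\ep(B_t-B_s)\,ds\,dt$ by the Riemann sum $(T/n)^2\sum_{k,l=1}^n p_\ep(B_{t_k}-B_{s_l})$, producing $\Phi_{\ep,n}$, a bounded measurable function of $B_{t_1},\dots,B_{t_n}$, hence an element of $\overline{\cP_\xi}^{L^p(\mu)}$ for $\xi_j=\1_{[0,t_j)}$ and any $1<p<\infty$. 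Taking $\eta=\1_{[0,T)}/\sqrt T$, Example~\ref{BMBB} shows $P_{\bot,\eta}\1_{[0,t)}=\1_{[0,t)}-\frac tT\1_{[0,T)}$, so $P_\eta$ turns each $B_{t_k}$ into the Brownian bridge $X_{t_k}$; Theorem~\ref{refWick} then produces precisely $\Phi_{\ep,n}\cdot\de_{x_T}(x_0+\sqrt i B_T)=\Psi_{\ep,n}\wick\frac{1}{\sqrt i}\de_{(x_T-x_0)/\sqrt i}(B_T)$ in $\cG'$.

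It then remains to pass to the limit, in the correct order, invoking at each stage the continuity of the Wick product $L^2(\mu)\times\cG'\to\cG'$. First let $n\to\infty$: continuity of the Brownian paths makes the Riemann sums converge $\mu$-a.s., and since $|\Phi_{\ep,n}|\le1$ and $|\Psi_{\ep,n}|\le1$, dominated convergence upgrades this to $L^2(\mu)$, giving $\Phi_{\ep,n}\to\Phi_\ep$ and $\Psi_{\ep,n}\to\Psi_\ep$, hence $\Phi_\ep\cdot\de_{x_T}(x_0+\sqrt i B_T):=\Psi_\ep\wick\frac{1}{\sqrt i}\de_{(x_T-x_0)/\sqrt i}(B_T)$. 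Then let $\ep\to0$: Corollary~\ref{expSILT1DBM} gives $\Phi_\ep\to\Phi$ and Corollary~\ref{expSILT1DBB} gives $\Psi_\ep\to\Psi$ in $L^2(\mu)$, so by Wick-product continuity the product converges in $\cG'$ to $\Psi\wick\frac{1}{\sqrt i}\de_{(x_T-x_0)/\sqrt i}(B_T)$, which is the asserted identity (with $x_0=x_T$ the bridge is $X_t=x_0+B_t-\frac tT B_T$). The delicate points to verify are that the double limit is consistent --- the product is \emph{defined} through these limits --- and that the Wick continuity is invoked in the appropriate topology at each step; this is also why one cannot take $\ep\to0$ before discretizing.
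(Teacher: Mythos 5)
Your proposal is correct and follows essentially the same route as the paper: regularize the kernel and use the complex-scaling identity $q_\ep(\sqrt i\,x)=\tfrac{1}{\sqrt i}p_\ep(x)$, discretize by Riemann sums so that Theorem~\ref{refWick} applies to finitely many Gaussian coordinates (with $P_\eta$ turning the Brownian motion into the bridge), and then pass to the limits $n\to\infty$ and $\ep\to0$ using Corollaries~\ref{expSILT1DBM} and~\ref{expSILT1DBB} together with the continuity of the Wick product from $L^2(\mu)\times\cG'$ to $\cG'$. The only cosmetic point is that $\Phi_{\ep,n}$ depends on the Brownian motion at the points of both partitions $\{t_k\}$ and $\{s_l\}$, not just $B_{t_1},\dots,B_{t_n}$, but this does not affect the argument.
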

In other words we show that for the limiting case $L\to 0$ of correlation length in the Edwards model (as we mentioned in Section 1), the corresponding Feynman integrand for identic start and end point is a well-defined object as a regular distribution of white noise. Its generalized expectation gives the corresponding Feynman propagator
\begin{align}\label{propa}\nonumber
G_{\de}&=K_{\de}(x_T,T;x_0,0)\nonumber \\
&=\E_{\mu} \lb \exp\lb \frac{1}{i} \int_0^T \lb -ig \int_0^T \de\lb \sqrt{i}(B_t-B_s)\rb ds\rb dt \rb \cdot \de_{x_T}\lb x_0+ \sqrt{i} B_T\rb \rb \nonumber\\
&=\E_{\mu} \lb \exp\lb-g i^{-1/2}  \int_0^T \int_0^T \de\lb X_t-X_s\rb ds\, dt \rb \wick  \de_{x_T}\lb x_0+ \sqrt{i} B_T\rb \rb \nonumber\\
&=\frac{1}{\sqrt{2\pi i T}}\exp\lb -\frac{1}{2iT}(x_T-x_0)^2\rb \,\, \E_{\mu} \lb \exp\lb-g i^{-1/2}  \int_0^T \int_0^T \de\lb X_t-X_s\rb ds\, dt \rb \rb ,
\end{align}
where $x_T=x_0$ and $g=\frac{k}{2\hbar}$. 

\section{Conclusion}

Using the explicit formula for the Feynman propagator $K_{\de}(x_0,T;x_0,0)$, see (\ref{propa}), one can study an important physical object, namely, the density of states. It can be represented as a function of the energy of an electron in random media by taking the Fourier transform of $K_{\de}(x_0,T;x_0,0)$ in the time variable $T$. The analysis of density of states in disordered structure has been a main object of interest in \cite{EG64} and \cite{Sam74}.

We also would like to mention the interesting research on random Hamiltonians with point interaction. An excellent reference on this subject is the monograph of Albeverio et al \cite{AGHH88}. This topic concerns with Schr\"odinger operator with stochastic potential and has been used for models of amorphous solids and disordered system with point (Dirac delta) interactions. These models are different from the Edwards model for electrons in random media considered in the present paper, although these models share some commons features. See also the remarks in the Notes of Chapter III in \cite{AGHH88}. In particular, the physical formulations are different as we briefly indicate below. In the random Hamiltonian model the Dirac delta potential as well as the stochasticity are incorporated right from beginning in the Schr\"odinger representation of the model. More precisely, the Hamiltonian $H_{\omega}$ is of the form
\[ H_{\omega}=-\frac{\hbar}{2m}\Delta+\sum_{j\in J}\eta_j \de_{r_j(\omega)}(\cdot), \]
where $J$ is an discrete index set, $r_j$ is a point source location which is a random variable defined on a probability space $(\Omega, \mathcal{F},P)$ and  $\eta_j$ is a coupling constant attached to the point source located at $r_j(\omega)$, $\omega \in \Omega$. Starting from this model one studies the properties of $H_{\omega}$ such as self-adjointness, spectrum, eigenfunctions, resonances, and scattering quantities, see Chapter III in \cite{AGHH88}. This is in contrast with the Edwards model which invokes the randomness through averaging the Feynman path integral representation of the system over all possible configurations of the point sources (scatterers). After that, the Dirac delta function in the potential part is obtained by taking limit $L\to 0$ of the correlation length. Hence, the classical action in present consideration is given by
\[ S_{\de}(x)= \int_0^T \lb \frac{m}{2}\dot{x}(t)^2 + \frac{ik}{2\hbar} \int_0^T\de(x(t)-x(s))\, ds\rb \, dt, \]
and we can read off the potential to be a complex-valued function (with negative imaginary part). This type of potentials is commonly used in the study of quantum mechanical system with unstable particles, see e.g. \cite{Wri84} and \cite{GS98}. Mathematically speaking, the corresponding Hamiltonian is no longer Hermitian, and consequently the time evolution operator is not unitary. In addition, the Dirac delta potential is non-local in time in the sense that the interaction at time $s$ still has some effects at another later time $t$. This non-local action corresponds to a non-Markov process and has self-attraction effect on the particle, see e.g. \cite{KL86}. These facts make the model considered in the present paper and its mathematical treatment substantially different from that in \cite{AGHH88}.

\vskip0.8cm
\textbf{Acknowledgment}: We thank Ludwig Streit for posing this interesting problem and Jose Luis da Silva for helpful remarks. The second author thanks the ISGS Kaiserslautern for the financial support in the form of a fellowship of the German state Rhineland-Palatinate.
The third author thanks the DAAD (German Academic Exchange Service) for a scholarship on the PhD Programme "Mathematics in Industry and Commerce" at University of Kaiserslautern and also the ISGS Kaiserslautern for the financial support in the form of a fellowship of the German state Rhineland-Palatinate. We would like to thank an unknown referee for helpful comments to the manuscript which lead us to a careful comparison of the model considered in the present paper with models using random Hamiltonians with point interactions.

\bibliographystyle{is-alpha}

\bibliography{paper}

\end{document}